\documentclass[11pt]{article}

\usepackage[numbers,sort&compress]{natbib}
\usepackage[T1]{fontenc}
\usepackage{graphicx}
\usepackage[utf8]{inputenc}
\usepackage{amsmath,mathrsfs,mathtools,amsthm}
\usepackage{amsmath,amsfonts,amssymb,epsfig}
\usepackage[usenames,dvipsnames]{color}

\usepackage{graphicx}
\usepackage{epstopdf}
\usepackage{yfonts}
\usepackage[T1]{fontenc}
\usepackage[all]{xy}
\usepackage{times}
\usepackage{hyperref}
\hypersetup{
    colorlinks=true,
    linkcolor=blue,
    citecolor=blue,
    urlcolor=blue
}

\usepackage{caption}
\usepackage[linesnumbered,ruled,vlined]{algorithm2e}

\newtheorem{Theorem}{Theorem}[section]
\newtheorem{Definition}[Theorem]{Definition}

\newtheorem{Corollary}[Theorem]{Corollary}
\newtheorem{Example}[Theorem]{Example}
\newtheorem{Remark}[Theorem]{Remark}

\newtheorem{Definition and Notation}[Theorem]{Definition and Notation}
\newtheorem{Lemma}[Theorem]{Lemma}

\begin{document}

\author{
  Maryam Bajalan\thanks{Institute of Mathematics and Informatics, Bulgarian Academy of Sciences, Bl. 8, Acad. G. Bonchev Str., 1113, Sofia, Bulgaria. Email: maryam.bajalan@math.bas.bg.} \and
  Peter Boyvalenkov\thanks{Institute of Mathematics and Informatics, Bulgarian Academy of Sciences, Bl. 8, Acad. G. Bonchev Str., 1113, Sofia, Bulgaria. Email: peter@math.bas.bg.}\and
  Ferruh Özbudak\thanks{Faculty of Engineering and Natural Sciences, Sabancı University, 34956, Istanbul, Turkey.  Email: ferruh.ozbudak@sabanciuniv.edu.}
  }

\title{Some structural properties of mixed orthogonal arrays and their irredundancy}

\date{}

\maketitle
\begin{abstract}
Mixed (asymmetric) orthogonal arrays (MOAs) generalize classical orthogonal arrays by allowing columns over different alphabets. However, their study requires very different structural tools than those used for symmetric orthogonal arrays (OAs), since several key features of the symmetric setting are no longer available in the mixed case, including Euclidean duality, a unique global index, and certain classical bounds. In this paper, we establish three structural results for mixed orthogonal arrays. First, we prove a Singleton-type upper bound and obtain a characterization of MDS and almost-MDS mixed orthogonal arrays. Second, we introduce a trace duality for $\mathbb{F}_q$-linear MOAs over 
$\prod_{i=1}^{s} \mathbb{F}_{q^{n_i}}$ and establish a correspondence with $\mathbb{F}_q$-linear error-block codes that determines the strength of the MOA via the dual distance of the associated error-block code. Finally, we develop a structural theory of irredundant mixed orthogonal arrays (IrMOAs), motivated by their role in the construction of $t$-uniform and absolutely maximally entangled (AME) quantum states. In the extremal case $t=\lfloor s/2\rfloor$, we prove that $\mathbb{F}_q$-linear IrMOAs with minimum index $1$ (yielding AME states of minimal support) are equivalent to $\mathbb{F}_q$-linear error-block MDS codes.
\end{abstract}
\begin{quotation}
\noindent \textbf{Keywords:} Mixed orthogonal arrays; Irredundant mixed orthogonal arrays; Error-block codes; MDS codes; Singleton-type bound; $t$-uniform states; Absolutely maximally entangled states.\\
\textbf{MSC(2020):} 05B15; 94B05; 81P40.
\end{quotation}

%=====================================================================
%Section: Introduction
%=====================================================================

\section{Introduction}
Mixed (asymmetric) orthogonal arrays (MOAs) are orthogonal arrays in which different columns take values from different alphabets, in contrast to symmetric orthogonal arrays (OAs), where all columns share a common alphabet. Early work on such arrays dates back to Addelman and Kempthorne \cite{Addelman_1962}, and the notion of asymmetrical orthogonal arrays was later formalized by Rao \cite{Rao1973}. Over the past decades, substantial progress has been made in constructing MOAs \cite{SuenDasDey2001,ZhangDengDey2017,ZhangZongDey2015,SuenDey2003,ZhangDuanYang2024,Suen2007,LinPang2025,LiZhuPang2025, NiuChenGaoPang2025, PangXuChenWang2018,PangEtAl2021}. Because the columns of MOAs take values from different alphabets, there are several differences between MOAs and OAs. For instance, the usual Euclidean inner product is not well-defined. Moreover, projections onto different sets of columns may have different cardinalities.  In addition, the indices associated with column subsets of an MOA depend on the chosen columns and therefore are not intrinsic parameters of the array. Consequently, the study of MOAs requires new tools that differ from those used for OAs.

The main contributions of this paper can be summarized as follows.

\begin{itemize}
    \item We establish a Singleton-type bound for mixed orthogonal arrays and derive a structural characterization of MDS and almost-MDS MOAs (Theorem \ref{Singleton}). 

{\it Explanation and Motivation}. A central problem in the theory of orthogonal arrays is to relate the (extremal) size of an array with its remaining parameters. The Rao and Singleton bounds are discussed in detail in \cite{HSS2012} (see Chapters 2 and 4 in that book). Orthogonal arrays of index $\lambda=1$ are the same as MDS codes. The original Rao and Singleton bounds were proved in 1947 \cite{Rao1947} and 1964 \cite{Singleton1964}, respectively. 

    \item We introduce a trace-based duality $\perp_{\mathrm{Tr}}$ for $\mathbb{F}_q$-linear MOAs over $\prod_{i=1}^{s}\mathbb{F}_{q^{n_i}}$, and construct an isomorphism $\rho$ to $\mathbb{F}_q$-linear error-block codes that preserves both distance and duality. Within this framework, an $\mathbb{F}_q$-linear MOA of strength $t$ has trace-dual distance at least $t+1$, with equality when the strength is exactly $t$ (Theorem~\ref{necessary-condition}). Furthermore, if $C$ is an $\mathbb{F}_q$-linear error-block code, then $\rho^{-1}(C)$ is an $\mathbb{F}_q$-linear MOA of strength $d_\pi(C^\perp)-1$ (Theorem~\ref{th:so}). This establishes a structural equivalence between linear MOAs and linear error-block codes.

{\it Explanation and Motivation}. In the symmetric setting, the codewords of a linear code form a linear orthogonal array whose strength is determined precisely by the Euclidean dual distance of the code (see \cite[Theorem~4.6]{HSS2012}). This connection was first observed by Kempthorne \cite{Kempthorne1947} and Bose \cite{Bose1961}, and later developed by Delsarte \cite{Delsarte1973}. However, this connection cannot be extended directly to the mixed setting, since the usual Euclidean inner product is not well-defined for MOAs.

    \item We define the minimum index $\lambda_{\min}$ of MOAs. We then extend the correspondence between a special class of IrOAs and AME states of minimal support in homogeneous systems to the heterogeneous setting. Theorem~\ref{cor:extreme_cases} shows that when $t=\lfloor s/2\rfloor$, IrMOAs with $M=\prod_{i=1}^{t} q^{n_i}$ (equivalently, with minimum index $\lambda_{\min}=1$) are equivalent to $\mathbb{F}_q$-linear error-block MDS codes. Consequently, this class of IrMOAs generates AME states of minimal support in heterogeneous systems. Moreover, Theorem \ref{latest} gives a method for constructing $\mathbb{F}_q$-linear IrMOAs from $\mathbb{F}_q$-linear error-block codes.
    
{\it Explanation and Motivation}. The notion of irredundant mixed orthogonal arrays (IrMOAs) originates in quantum information theory through the concept of $t$-uniform states. Let $H_q=\{0,1,\ldots,q-1\}$. An $s$-partite pure state in the heterogeneous system $\mathbb{C}^{q_1}\otimes\cdots\otimes\mathbb{C}^{q_s}$ can be written as $|\psi\rangle=\sum_{\mathbf{u}\in H_{q_1}\times\cdots\times H_{q_s}} a_{\mathbf{u}}\,|\mathbf{u}\rangle,$
where $a_{\mathbf{u}}\in\mathbb{C}$ and $|\mathbf{u}\rangle$ denotes the standard basis vector indexed by $\mathbf{u}$. 
When all $q_i$ are equal, the system is called homogeneous. The support of $|\psi\rangle$ is the number of non-zero coefficients in this expansion \cite{GALRZ2015}. 
The state is called $t$-uniform if every reduction to any subset of $t$ parties is maximally mixed \cite{GZ2014,S2004}. 
In the extremal case $t=\lfloor s/2\rfloor$, the state is called absolutely maximally entangled (AME). 
Such states play an important role in quantum information tasks such as quantum secret sharing and quantum error correction \cite{GALRZ2015,HGS2017,AlsinaRazavi2021,SSCZ2022}. 
In the homogeneous case, an AME state has support at least $q^{\lfloor s/2\rfloor}$ \cite[Proposition~1]{B2019}, and it is said to have minimal support when this bound is attained \cite{GALRZ2015}. Irredundant orthogonal arrays (IrOAs) and irredundant mixed orthogonal arrays (IrMOAs) were introduced by Goyeneche et al. \cite{GBZ2016, GZ2014} as a combinatorial framework for constructing $t$-uniform states in homogeneous and heterogeneous systems (see also \cite{Bajalan_2025,GALRZ2015,HGS2017,LiWang2019,SSCZ2022,ShenChen2021,ShiNingZhaoZhang2024,FJXY2023,ChenJiLei,PangZhangFeiZheng2021}). 
When $t=\lfloor s/2\rfloor$, IrOAs with $M=q^t$ (equivalently, with index $\lambda=1$) are equivalent to classical MDS codes \cite{B2019,GALRZ2015}. Consequently, this class of IrOAs constructs AME states of minimal support in homogeneous systems. 
\end{itemize}

The paper is organized as follows. Section 2 reviews $\mathbb{F}_q$-linear error-block codes of length $n$ and type $\pi$, including the $\pi$-distance and the block structure of generator and parity-check matrices. Section 3 introduces mixed orthogonal arrays $\mathrm{MOA}(M, s, (q^{n_1}, \dots, q^{n_s}), t)$ over the alphabet $\mathbb{F}_{q^{n_1}} \times \dots \times \mathbb{F}_{q^{n_s}}$, establishes a generalized Singleton-type bound, and defines the minimum index $\lambda_{\min}$ to characterize MDS and almost-MDS arrays. Section 4 develops the trace duality for MOAs and relates it to the Euclidean dual of the associated error-block code via an isomorphism $\rho$ that preserves both distance and duality. Section 5 proves the correspondence between MOA strength and dual distance and derives constructions of $\mathbb{F}_q$-linear MOAs from error-block codes. Finally, Section 6 investigates IrMOAs, proves that $\mathbb{F}_q$-linear IrMOAs with minimum index $1$, which yield AME states of minimal support, are equivalent to $\mathbb{F}_q$-linear error-block MDS codes, and presents criteria under which a linear error-block code and its dual generate IrMOAs.

%=====================================================================
%Section: Preliminaries
%=====================================================================

\section{Preliminaries}
In this section, we recall the definitions and some basic properties of $\mathbb{F}_q$-linear error-block codes. For further details, we refer the reader to \cite{Feng2006, Ling2007} and the references therein.

A partition of the integer $n$ is an expression of the form
$$
n = l_1 m_1 + l_2 m_2 + \dots + l_t m_t,
$$
where the integers satisfy $m_1 > m_2 > \dots > m_t \ge 1$ and $l_j \ge 1$ for all $1\le j\le t$. Here $m_j$ denotes the size of each block of type $j$, and $l_j$ indicates how many such blocks occur. The total number of blocks is therefore $s = l_1 + l_2 + \dots + l_t$. 
For convenience, define integers $n_1, n_2, \dots, n_s$ so that
\begin{align*}
n_1 &= \dots = n_{l_1} = m_1, \\
n_{l_1+1} &= \dots = n_{l_1+l_2} = m_2, \\
\vdots &\\
n_{l_1+\dots+l_{t-1}+1} &= \dots = n_s = m_t.
\end{align*}

This means that the sequence $(n_1, n_2, \dots, n_s)$ lists the block sizes $m_j$ in decreasing order as follows:
$$
\overbrace{%
\begin{array}{cccr}
m_1 & m_1 & \dots & m_1 \\
\downarrow & \downarrow & & \downarrow \\
n_1 & n_2 & \dots & n_{l_1}
\end{array}
}^{l_1}
\,\,
\overbrace{%
\begin{array}{cccr}
m_2 & m_2 & \dots & m_2 \\
\downarrow & \downarrow & & \downarrow \\
n_{l_1+1} & n_{l_1+2} & \dots & n_{l_1+l_2}
\end{array}
}^{l_2}
\,\,
\dots
\,\,
\overbrace{%
\begin{array}{cccr}
m_t & m_t & \dots & m_t \\
\downarrow & \downarrow & & \downarrow \\
n_{s-l_t+1} & n_{s-l_t+2} & \dots & n_{s} 
\end{array}
}^{l_t}.
$$
Thus, we can express
$$n = n_1 + n_2 + \cdots + n_s = l_1 m_1 + l_2 m_2 + \cdots + l_t m_t,$$
where $n_1 \ge n_2 \ge \dots \ge n_s$ and $s \ge 1$.
We represent this partition of $n$ either as
$$\pi = [m_1]^{l_1}[m_2]^{l_2}\cdots[m_t]^{l_t}
\quad \text{or equivalently as} \quad
\pi = [n_1][n_2]\cdots[n_s].$$
Let $q = p^{r}$ be a prime power, and let $\mathbb{F}_q$ denote the finite field with $q$ elements.
For the partition $\pi = [n_1][n_2]\cdots[n_s]$, the vector space $\mathbb{F}_q^n$ decomposes as
\begin{equation}
\mathbb{F}_q^n
= \mathbb{F}_q^{n_1} \times \mathbb{F}_q^{n_2} \times \cdots \times \mathbb{F}_q^{n_s}
= V_1 \times V_2 \times \cdots \times V_s,
\end{equation}
where each component $V_i$ denotes the subspace $\mathbb{F}_q^{n_i}$.

\begin{Definition}\label{ma18}
Let $\pi = [n_1][n_2]\cdots[n_s]$ be the partition of $n$, and let
$$
\boldsymbol{u} = (\boldsymbol{u}_1, \boldsymbol{u}_2, \dots, \boldsymbol{u}_s) \in \mathbb{F}_q^n = V_1 \times V_2 \times \cdots \times V_s,
$$
where each block $\boldsymbol{u}_i$ belongs to the corresponding subspace $V_i = \mathbb{F}_q^{n_i}$.
The $\pi$-weight of $\boldsymbol{u}$, denoted by $w_\pi(\boldsymbol{u})$, is defined as the number of non-zero blocks in $\boldsymbol{u}$, i.e.,  $w_\pi(\boldsymbol{u}) = \#\{ i \;:\, \boldsymbol{u}_i \ne 0 \}$.
\end{Definition}

\begin{Example}
 Let $n = 9$ and consider the partition $\pi = [3]^2[1]^3.$
We have
\begin{itemize}
    \item  $s = l_1 + l_2 = 2 + 3 = 5$;
    \item $ n_1 = n_2 = m_1 = 3, \quad n_3 = n_4 = n_5 = m_2 = 1. $   
\end{itemize}
Hence $\pi = [n_1][n_2][n_3][n_4][n_5]=[3][3][1][1][1].$ Thus
$$
\mathbb{F}_q^9
= \mathbb{F}_q^{3} \times \mathbb{F}_q^{3} \times \mathbb{F}_q^{1} \times \mathbb{F}_q^{1} \times \mathbb{F}_q^{1}
= V_1 \times V_2 \times V_3 \times V_4 \times V_5,
$$
which corresponds to the partition $\pi=[3][3][1][1][1]$. Consider the vectors
$$\boldsymbol{u} = \Big((1,0,0), (0,0,0), 0, 0, 0\Big),\quad \boldsymbol{v} = \Big((1,0,0), (0,1,0), 0, 0, 0\Big),\quad \boldsymbol{w} = \Big((1,1,1), (0,1,0), 1, 1, 1\Big).$$
The corresponding $\pi$-weights are $ w_\pi\big(\boldsymbol{u}\big)=1,$  $w_\pi\big( \boldsymbol{v}\big)=2$, and $w_\pi\big(\boldsymbol{w}\big)=5$.
\end{Example}

\begin{Definition}
An $\mathbb{F}_q$-linear error-block code of type $\pi=[n_1][n_2]\cdots[n_s]$ with parameters $[n, k, d_{\pi}]_q$ is an $\mathbb{F}_q$-linear subspace
$$
C \subseteq \mathbb{F}_q^n= \mathbb{F}_q^{n_1} \times \mathbb{F}_q^{n_2} \times \cdots \times \mathbb{F}_q^{n_s},
$$
where $k = \dim_{\mathbb{F}_q}(C)$, and the minimum $\pi$-distance $d_{\pi}$ of $C$ is defined by
$$
d_{\pi} = d_{\pi}(C):=\min \{ w_\pi(\boldsymbol{c}) \;:\; \boldsymbol{c} \in C, \boldsymbol{c} \ne 0 \}.
$$
\end{Definition}
An $\mathbb F_q$-linear error-block code of type $\pi = [1][1]\cdots[1]$ reduces to a classical $\mathbb{F}_q$-linear code. 

A generator matrix of an $\mathbb F_q$-linear error-block code $C$ of type $\pi = [n_1][n_2]\cdots[n_s]$ with parameters $[n,k,d_\pi(C)]_q$ is a $k \times n$ matrix of the form
\begin{align}\label{generator}
    G = \begin{bmatrix}
G_1 \mid G_2 \mid \dots \mid G_s
\end{bmatrix},
\end{align}
where each block $G_i$ is a $k \times n_i$ matrix corresponding to the $i$-th block of the partition and the rows of $G$ form an $\mathbb{F}_q$-basis of $C$. Similarly, a parity-check matrix of $C$ is
\begin{align}\label{parity}
    H = \begin{bmatrix}
H_1 \mid H_2 \mid \dots \mid H_s
\end{bmatrix},
\end{align}
where each $H_i$ is a matrix of size $(n - k) \times n_i$, and a vector $\boldsymbol{c} = (\boldsymbol{c}_1, \boldsymbol{c}_2, \dots, \boldsymbol{c}_s) \in \mathbb{F}_q^n$ is a codeword of $C$ if and only if
$$
H \boldsymbol{c}^T = 0
\quad \Longleftrightarrow \quad
\sum_{i=1}^{s} H_i \boldsymbol{c}_i^T = 0.
$$

The following lemma, which appears implicitly in \cite[Theorem 2.1]{Feng2006} without a proof, is stated and proved here as it will be used frequently in what follows.

\begin{Lemma}\label{ma14}
Let $C\subseteq\mathbb F_q^n$ be an $\mathbb F_q$-linear error-block code of type $\pi=[n_1]\cdots[n_s]$ with the minimum $\pi$-distance $d_{\pi}$ and with the parity-check matrix $H$ given in the form \eqref{parity}.
\begin{enumerate}
    \item If, for every choice of $t$ blocks of $H$, the union of their columns is an $\mathbb F_q$-linearly independent set, then $d_{\pi}(C)\ge t+1.$
    \item If there exists a set of $t+1$ blocks of $H$ whose union of columns is $\mathbb F_q$-linearly dependent, then $d_\pi(C)\le t+1.$
\end{enumerate}
\end{Lemma}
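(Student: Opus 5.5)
The plan is to translate the codeword condition $H\boldsymbol{c}^T=0$ into a linear relation among columns of $H$ grouped by blocks. Write $\boldsymbol{c}=(\boldsymbol{c}_1,\dots,\boldsymbol{c}_s)$. The identity $H\boldsymbol{c}^T=\sum_{i=1}^s H_i\boldsymbol{c}_i^T$ says exactly that the entries of the nonzero blocks of $\boldsymbol{c}$ are the coefficients of a linear combination of the columns of the corresponding blocks $H_i$. Both parts then follow by reading this identity in the two directions.

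\emph{Part 1.} I argue by contradiction. Suppose $d_\pi(C)\le t$. Then there is a nonzero $\boldsymbol{c}\in C$ with support set $S=\{i:\boldsymbol{c}_i\neq 0\}$ satisfying $1\le |S|\le t$. I enlarge $S$ to a set $T\supseteq S$ of exactly $t$ blocks, which is possible as long as $t\le s$. If $t>s$, I instead read the hypothesis as applying to all $s$ blocks; then $H$ has full column rank, $C=\{0\}$, and the bound holds vacuously by the convention $d_\pi(\{0\})=\infty$. With $T$ in hand, the relation $\sum_{i\in T}H_i\boldsymbol{c}_i^T=0$ has coefficient vector $(\boldsymbol{c}_i)_{i\in T}$, which is nonzero because some $\boldsymbol{c}_i$ with $i\in S$ is nonzero. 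This is a nontrivial dependence among the union of the columns of the blocks in $T$, contradicting the hypothesis. Hence $d_\pi(C)\ge t+1$.

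\emph{Part 2.} Let $T$ be a set of $t+1$ blocks whose columns are dependent. Then there exist vectors $\boldsymbol{a}_i\in\mathbb{F}_q^{n_i}$ for $i\in T$, not all zero, with $\sum_{i\in T}H_i\boldsymbol{a}_i^T=0$. Define $\boldsymbol{c}$ by $\boldsymbol{c}_i=\boldsymbol{a}_i$ for $i\in T$ and $\boldsymbol{c}_i=0$ for $i\notin T$. Then $H\boldsymbol{c}^T=0$, so $\boldsymbol{c}\in C$. Moreover $\boldsymbol{c}\neq 0$ and $w_\pi(\boldsymbol{c})\le |T|=t+1$, which gives $d_\pi(C)\le t+1$.

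The main care point is bookkeeping rather than a real obstacle. A column that appears in two chosen blocks is impossible, since the blocks are disjoint, so the union of columns is an honest concatenation and the dependence coefficients are exactly the block vectors. The other point to handle is the degenerate cases $t\ge s$ or $C=\{0\}$, which I treat via the convention above.
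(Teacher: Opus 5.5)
Your proposal is correct and follows essentially the same route as the paper. In Part 1 you extend the support of a low-weight codeword to exactly $t$ blocks to obtain a nontrivial dependence, and in Part 2 you assemble a nonzero codeword of $\pi$-weight at most $t+1$ from the dependence coefficients. Your explicit handling of the degenerate case $t>s$ tightens the paper's argument slightly: the paper's step of adjoining $t-r$ further blocks silently assumes $t\le s$.
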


\begin{proof}
\begin{enumerate}
    \item Assume $d_\pi(C)\le t$. Thus there exists a non-zero vector $\boldsymbol{c}\in C$ whose $\pi$-weight is $r\le t$. Let $S=\{j_1,\dots,j_r\}$ be the indices of the non-zero blocks of $\boldsymbol{c}$.
    Thus $\boldsymbol{c}_i=0$ for all $i\notin S$, and $\boldsymbol{c}_{j_k}\neq 0$ for all $k\in \{1,\dots,r\}$. Since $\boldsymbol{c}\in C$, we have $H\boldsymbol{c}^{\top}=\mathbf 0$, which implies
$$
H_{j_1} \boldsymbol{c}_{j_1}^\top + H_{j_2} \boldsymbol{c}_{j_2}^\top + \cdots + H_{j_r} \boldsymbol{c}_{j_r}^\top = \mathbf 0.
$$
Hence, the union of columns of blocks $\mathcal{A}:=\{H_{j_1},\dots,H_{j_r}\}$ is an $\mathbb F_q$-linearly dependent set. Since $r\le t$, we may adjoin any additional $t-r$ blocks of $H$ to $\mathcal{A}$; the union of the columns of these $t$ blocks is still $\mathbb F_q$-linearly dependent, contradicting the hypothesis.

which contradicts the hypothesis.
\item Assume that $t+1$ block indices are $J=\{j_1,\dots,j_{t+1}\}$. Since the union of the columns of the block matrices $H_{j_1},\dots,H_{j_{t+1}}$ is $\mathbb F_q$-linearly dependent, there exist vectors $\boldsymbol{c}_{j_r}\in\mathbb F_q^{n_{j_r}}$, not all zero, such that
$$
H_{j_1}\boldsymbol{c}_{j_1}^{\top} + H_{j_2}\boldsymbol{c}_{j_2}^{\top} + \cdots + H_{j_{t+1}}\boldsymbol{c}_{j_{t+1}}^{\top} = 0.
$$
Define the non-zero vector $\boldsymbol{c}\in\mathbb F_q^n$ by placing these block vectors $\boldsymbol{c}_{j_r}$ in the corresponding block positions and zeros elsewhere. Then 
$$
H \boldsymbol{c}^T = \sum_{i=1}^s H_i \boldsymbol{c}_i^T = H_{j_1}\boldsymbol{c}_{j_1}^T + \cdots + H_{j_{t+1}}\boldsymbol{c}_{j_{t+1}}^T = 0.
$$
It follows that the non-zero vector $\boldsymbol{c}$ is in $C$. Since the number of non-zero blocks of $\boldsymbol{c}$ is at most $t+1$, we have $d_{\pi}(C)\le t+1$.
\end{enumerate}
\end{proof}

The following lemma from \cite{Feng2006} allows us to derive a $\pi$-distance analogue of a fundamental inequality from the Delsarte–Levenshtein theory for $\mathbb{F}_q$-linear error-block codes (see, e.g., \cite[Theorem 4.5]{Levenshtein}).

\begin{Lemma}\cite[Theorem 2.1]{Feng2006}\label{makh}
Let $C$ be an $\mathbb F_q$-linear error-block code of type $\pi = [n_1][n_2]\dots[n_s]$ with parameters $[n,k,d_\pi]_q$. Then the Singleton-type bound is
\begin{equation}
n - k \ge n_1 + n_2 + \dots + n_{d_\pi-1} \quad \left( \iff k \le n_{d_\pi} + n_{d_\pi+1} + \dots + n_s \right).
\end{equation}
\end{Lemma}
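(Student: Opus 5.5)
We argue by puncturing: delete the $d_\pi-1$ largest blocks and show that the remaining projection is injective on $C$. Write $d=d_\pi(C)$ and consider the $\mathbb F_q$-linear projection
$$
\phi: C\longrightarrow V_{d}\times V_{d+1}\times\cdots\times V_s=\mathbb F_q^{\,n_d+n_{d+1}+\cdots+n_s},\qquad \boldsymbol c=(\boldsymbol c_1,\dots,\boldsymbol c_s)\mapsto(\boldsymbol c_d,\dots,\boldsymbol c_s).
$$

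We first show that $\phi$ is injective. Suppose $\boldsymbol c\in C$ satisfies $\phi(\boldsymbol c)=0$. Then every nonzero block of $\boldsymbol c$ lies among the first $d-1$ positions, so $w_\pi(\boldsymbol c)\le d-1$. If $\boldsymbol c\neq 0$, this contradicts the definition of $d_\pi(C)$ as the minimum $\pi$-weight of a nonzero codeword. Hence $\ker\phi=0$.

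Injectivity gives
$$
k=\dim_{\mathbb F_q}C\le n_d+n_{d+1}+\cdots+n_s .
$$
Since $n=n_1+\cdots+n_s$, this is equivalent to $n-k\ge n_1+\cdots+n_{d-1}$.

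The ordering $n_1\ge n_2\ge\cdots\ge n_s$ is what makes this the strongest bound obtainable. Deleting any $d-1$ blocks yields an injective projection by the same argument. Deleting the $d-1$ largest blocks leaves the smallest possible target space, so it gives the stated inequality.

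Everything here is routine. The only point needing care is the degenerate case $d=1$, where no blocks are deleted and the bound reduces to the trivial $k\le n$.

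An alternative route would go through the parity-check matrix and Lemma \ref{ma14}. It would show that the columns of any $d-1$ blocks of $H$ are linearly independent, which forces $n-k\ge n_1+\cdots+n_{d-1}$. That route is slightly more delicate, since part 1 of Lemma \ref{ma14} is stated only as a sufficient condition, and the converse would have to be argued directly. So I would use the projection argument above.
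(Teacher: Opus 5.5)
Your proof is correct: a codeword in the kernel of the projection onto blocks $d_\pi,\dots,s$ has $\pi$-weight at most $d_\pi-1$, so it is zero, and injectivity gives $k\le n_{d_\pi}+\cdots+n_s$. The paper does not prove this lemma; it imports it from \cite{Feng2006}. Its remark before Lemma~\ref{ma14} suggests that the cited proof works with the parity-check matrix.

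Your puncturing argument is instead the linear form of the paper's own proof of the upper bound in Theorem~\ref{Singleton}, which deletes the first $d_H-1$ columns and notes that the rows stay distinct. That argument uses only the minimum distance, not orthogonality. Applied to $\rho^{-1}(C)$, which has $q^k$ rows and minimum Hamming distance $d_\pi(C)$ because $w_H(\mathbf a)=w_\pi(\rho(\mathbf a))$, it gives the lemma exactly. Your route needs no parity-check matrix and, phrased as distinctness of rows, not even linearity.

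Your closing remark overstates the difficulty of the other route. The converse of part~1 of Lemma~\ref{ma14}, which you say would have to be argued directly, is just the contrapositive of part~2 with $t=d_\pi-2$. For $d_\pi\ge 2$, if the columns of some $d_\pi-1$ blocks of $H$ were linearly dependent, part~2 would give $d_\pi(C)\le d_\pi-1$, which is absurd. Hence the $n_1+\cdots+n_{d_\pi-1}$ columns of $H_1,\dots,H_{d_\pi-1}$ are linearly independent in $\mathbb F_q^{n-k}$, and $n-k\ge n_1+\cdots+n_{d_\pi-1}$ follows at once. That version gives the bound directly in its $n-k$ form and matches how Lemma~\ref{ma14} is used later in Theorems~\ref{ma20}, \ref{necessary-condition} and~\ref{th:so}.
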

An $\mathbb F_q$-linear error-block code is called maximum distance separable (MDS) if equality holds in the Singleton-type bound.
\begin{Corollary}\label{sari}
Let $C$ be an $\mathbb{F}_q$-linear error-block code of type $\pi=[n_1][n_2]\dots[n_s]$ with parameters $[n,k,d_\pi]_q$, and let $d_\pi^\perp$ denote the $\pi$-distance of the dual code $C^\perp$. Then
$d_\pi+d_\pi^\perp\le n+2$.
\end{Corollary}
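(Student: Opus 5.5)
The plan is to apply the Singleton-type bound of Lemma~\ref{makh} twice, once to $C$ and once to its Euclidean dual $C^\perp$, and then add the two inequalities. The only extra ingredient is that every block size is at least one.

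First, Lemma~\ref{makh} applied to $C$, with $\dim_{\mathbb F_q} C = k$, gives
\[
n-k \;\ge\; n_1+n_2+\cdots+n_{d_\pi-1}.
\]
Since $n_i\ge 1$ for every $i$, the right-hand side has $d_\pi-1$ terms each at least $1$, so $n-k\ge d_\pi-1$.

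Next, $C^\perp\subseteq\mathbb F_q^n$ is again an $\mathbb F_q$-linear error-block code of the same type $\pi=[n_1]\cdots[n_s]$. Its dimension is $n-k$ and its minimum $\pi$-distance is $d_\pi^\perp$. Applying Lemma~\ref{makh} to $C^\perp$ therefore gives
\[
n-(n-k)=k \;\ge\; n_1+\cdots+n_{d_\pi^\perp-1}\;\ge\; d_\pi^\perp-1 .
\]

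Adding the two inequalities yields $n\ge d_\pi+d_\pi^\perp-2$, which is the claim.

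The only point needing care is the degenerate case $C=\{0\}$ or $C=\mathbb F_q^n$. There one of $C$, $C^\perp$ has no nonzero codeword, so its minimum $\pi$-distance is not defined by the minimum over nonzero codewords. I would either exclude these cases, since the statement implicitly assumes both distances exist, or adopt a convention for them and check it separately. Otherwise the argument is routine, so I expect no real obstacle. The loss from replacing $n_1+\cdots+n_{d-1}$ by $d-1$ only weakens the bound, so no ordering subtleties of the $n_i$ arise.
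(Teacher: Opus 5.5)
Your proposal is correct and is essentially the paper's proof: apply Lemma~\ref{makh} to $C$ and to $C^\perp$, use $n_i\ge 1$ to get $n-k\ge d_\pi-1$ and $k\ge d_\pi^\perp-1$, then add the two inequalities. Your remark about the degenerate cases $C=\{0\}$ and $C=\mathbb F_q^n$, where one of the two distances is undefined, is a sensible addition that the paper leaves implicit.
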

\begin{proof}
    Since $n_i \ge 1$ for all $i \in \{1, \dots, s\}$, applying Lemma~\ref{makh}  to $C$ and $C^\perp$ gives
$$
n - k  \ge d_{\pi} - 1, \quad k \ge d_{\pi}^\perp - 1,
$$
which proves the result.
\end{proof}
%=====================================================================
%Section: Mixed orthogonal array
%=====================================================================

\section{Mixed orthogonal array}
For a positive integer $q$, the set $H_q = \{0, 1, \ldots, q-1\}$ represents an alphabet of size $q$. 
\begin{Definition}
A mixed orthogonal array $\mathrm{MOA}(M, s, (q_1, \dots, q_s), t)$ is an $M \times s$ array whose $j$-th column takes its entries from an alphabet $H_{q_j}$, such that in any $M \times t$ subarray, each of the possible $t$-tuples occurs the same number of times.
\end{Definition}
In this paper, we restrict attention to mixed orthogonal arrays whose alphabet sizes are $H_{q_i}=\mathbb{F}_{q^{n_i}}$. For convenience, we also assume that $n_1\ge n_2\ge\cdots\ge n_s$, which can always be achieved by reordering the columns.

Each row of $\mathrm{MOA}(M, s, (q^{n_1},\dots, q^{n_s}), t)$ can  be viewed as an element of $\mathbb{F}_{q^{n_1}} \times \mathbb{F}_{q^{n_2}} \times \cdots \times \mathbb{F}_{q^{n_s}}$.
If we select $t$ columns indexed by $i_1, i_2, \ldots, i_t$, then the entries in these columns come from $\mathbb{F}_{q^{n_{i_1}}} \times \mathbb{F}_{q^{n_{i_2}}} \times \cdots \times \mathbb{F}_{q^{n_{i_t}}}$.  Consequently, the corresponding $M\times t$ subarray contains exactly  $q^{n_{i_1}+n_{i_2}+\cdots+n_{i_t}}$ distinct possible rows. By the definition of a mixed orthogonal array, each of these rows occurs the same number of times. Since the array has $M$ rows, each occurs exactly
\begin{align}\label{index}
\frac{M}{q^{n_{i_1}+n_{i_2}+\cdots+n_{i_t}}}
\end{align}
times. This number is called the index corresponding to the column subset $\{i_1,i_2,\ldots,i_t\}$ and is denoted by $\lambda_{i_1,i_2,\ldots,i_t}$. In general, these indices depend on the chosen column subset. Consequently, unlike ordinary orthogonal arrays, a mixed orthogonal array does not possess a single global index. However, when $n_1=n_2=\cdots=n_s$, all indices $\lambda_{i_1,i_2,\ldots,i_t}$ coincide, and the array has a single index, usually denoted by $\lambda$.

To save space, some examples present the array in its transposed form as an $s \times M$ matrix rather than an $M \times s$ matrix. In these cases, the term ``transposed'' is explicitly noted.

\begin{Example}\label{e1}
Consider the following array in which  the first  column takes values in $\mathbb F_4$ and the other columns take values in $\mathbb F_2$ (transposed):
$$
\begin{array}{cccccccc}
0 & 0 & 1 & 1 & \alpha & \alpha & \alpha+1 & \alpha+1 \\
0 & 1 & 0 & 1 & 0 & 1 & 0 & 1 \\
0 & 1 & 0 & 1 & 1 & 0 & 1 & 0 \\
0 & 1 & 1 & 0 & 0 & 1 & 1 & 0 \\
0 & 1 & 1 & 0 & 1 & 0 & 0 & 1
\end{array}.
$$
This array is an $\mathrm{MOA}(8,5,(2^2,2,2,2,2),2)$ with $n_1=2, n_i=1$ for all $i=2,3,4,5$. We have 
$$
\lambda_{i_1,i_2}=
\begin{cases}
\frac{8}{2^2}=2, & 2\le i_1<i_2\le5,\\[4pt]
\frac{8}{2^3}=1, & i_1=1,\; 2\le i_2\le5 .
\end{cases}
$$
\end{Example}

It is well known that any $t$ columns of an orthogonal array of strength $t$ over $\mathbb{F}_q$ are $\mathbb{F}_q$-linearly independent; see \cite[Theorem 3.27]{HSS2012}. The following theorem shows that the same property holds for mixed orthogonal arrays.

\begin{Theorem}\label{ma16}
Let $C$ be a mixed orthogonal array $\mathrm{MOA}(M,s,(q^{n_1},\dots,q^{n_s}),t)$. Then any $t$ columns of $C$ are $\mathbb{F}_q$-linearly independent.
\end{Theorem}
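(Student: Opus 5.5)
We read the statement in the sense of \cite[Theorem 3.27]{HSS2012}. View $C$ as a multiset of $M$ rows in $\mathbb{F}_{q^{n_1}}\times\cdots\times\mathbb{F}_{q^{n_s}}$, and identify each $\mathbb{F}_{q^{n_i}}$ with $\mathbb{F}_q^{n_i}$ via a fixed $\mathbb{F}_q$-basis. Column $i$ of $C$ then becomes $n_i$ coordinate functions, each giving a vector in $\mathbb{F}_q^{M}$. A selection of $t$ columns $i_1,\dots,i_t$ is \emph{$\mathbb{F}_q$-linearly independent} if the $n_{i_1}+\cdots+n_{i_t}$ coordinate vectors in $\mathbb{F}_q^M$ obtained this way are linearly independent. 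Equivalently, for $\mathbb{F}_q$-linear functionals $\ell_j:\mathbb{F}_{q^{n_{i_j}}}\to\mathbb{F}_q$, the relation $\sum_j \ell_j(c_{r,i_j})=0$ holding for every row $r$ forces all $\ell_j=0$.

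The plan is to argue by contradiction. Suppose some functionals $\ell_1,\dots,\ell_t$, not all zero, satisfy $\sum_j \ell_j(c_{r,i_j})=0$ for all rows $r$. Define the $\mathbb{F}_q$-linear map
\[
L:\mathbb{F}_{q^{n_{i_1}}}\times\cdots\times\mathbb{F}_{q^{n_{i_t}}}\to\mathbb{F}_q,\qquad L(x_1,\dots,x_t)=\sum_{j=1}^t \ell_j(x_j).
\]
Since some $\ell_j\neq0$, $L$ is a nonzero linear functional, hence surjective. Its kernel therefore has size $q^{N-1}$, where $N=n_{i_1}+\cdots+n_{i_t}$. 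In particular there is a tuple $(x_1,\dots,x_t)$ with $L(x_1,\dots,x_t)=1$.

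Next we use the orthogonality property. In the $M\times t$ subarray on columns $i_1,\dots,i_t$, every one of the $q^N$ tuples occurs exactly $\lambda_{i_1,\dots,i_t}=M/q^N\ge1$ times, by \eqref{index}. So the tuple $(x_1,\dots,x_t)$ with $L=1$ actually appears as a row restriction $(c_{r,i_1},\dots,c_{r,i_t})$ for some $r$. For that row, $\sum_j\ell_j(c_{r,i_j})=1\neq0$, which contradicts the assumed dependence. Hence the coordinate vectors are independent.

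We can also count: over the rows, $L$ takes each value in $\mathbb{F}_q$ exactly $M/q$ times, so the dependence relation fails on $M(q-1)/q>0$ rows. This also shows the selected coordinate vectors span an $N$-dimensional space, which forces $M\ge q^N$.

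The only delicate step is fixing the meaning of linear independence in the mixed setting, namely passing from the $\mathbb{F}_{q^{n_i}}$-valued columns to $\mathbb{F}_q$-coordinates through a basis. Once that convention is set, the argument is a single application of surjectivity of a nonzero functional together with the fact that every tuple occurs. For linear MOAs with generator-matrix blocks $G_i$, the same argument reads: the union of the columns of any $t$ blocks $G_{i_1},\dots,G_{i_t}$ is linearly independent. This is the form used alongside Lemma~\ref{ma14}.
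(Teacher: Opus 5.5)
Your proof is correct, and it rests on the same fact as the paper's: in the $M\times t$ subarray every $t$-tuple occurs, so some row violates any nontrivial relation. The difference is what ``$\mathbb{F}_q$-linearly independent'' is taken to mean.

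As written, the paper's proof works with one scalar per column. It takes a relation $\alpha_1u_{j_1}+\cdots+\alpha_tu_{j_t}=0$ with a single $\alpha_j\in\mathbb{F}_q$ for each column, and kills each $\alpha_j$ by substituting the unit tuples $(0,\dots,1,\dots,0)$. You allow an arbitrary $\mathbb{F}_q$-linear functional $\ell_j$ on each $\mathbb{F}_{q^{n_{i_j}}}$. That is, you prove that all $N=n_{i_1}+\cdots+n_{i_t}$ coordinate vectors in $\mathbb{F}_q^M$ are independent. You get the contradiction from a tuple on which the nonzero functional $L$ equals $1$. The paper's substitution idea would give the same conclusion if one plugged in tuples carrying a basis element in one position and $0$ elsewhere.

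Your version is strictly stronger. A column over $\mathbb{F}_4$ with entries only in $\{0,1\}$ is independent in the one-scalar sense, yet its two $\mathbb{F}_2$-coordinate vectors span a space of dimension at most $1$. Your version is also exactly what the paper uses later. The proofs of Theorems~\ref{ma20} and~\ref{necessary-condition} identify independence of $t$ columns with independence of the union of the columns of the blocks $\rho(V_1),\dots,\rho(V_t)$. That identification holds for your notion, but it does not follow from the one-scalar statement merely because $\rho$ is an isomorphism. So your formulation closes that gap at no extra cost.

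One small slip appears in your side remark. Independence of $N$ vectors in $\mathbb{F}_q^M$ only gives $M\ge N$. The bound $M\ge q^N$ comes directly from every tuple occurring, which is the lower bound in Theorem~\ref{Singleton}, not from the dimension count. This does not affect the main argument.
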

\begin{proof}
Let $u_1,u_2,\ldots,u_s$ denote the columns of $C$.  Suppose that
\begin{align}\label{run}
    \alpha_1 u_{j_1}+\alpha_2 u_{j_2}+\cdots+\alpha_t u_{j_t}=0, \quad\alpha_i\in\mathbb{F}_q.
\end{align}
Since $C$ is an $\mathrm{MOA}$ of strength $t$, every tuple in  $\mathbb{F}_{q^{n_{j_1}}}\times\cdots\times\mathbb{F}_{q^{n_{j_t}}}$ appears among the rows of the $M\times t$ subarray formed by these columns. In particular, all vectors $(0,\dots,0,1,0,\dots,0)$, in which the $1$ appears in the $i$-th position and $0$ elsewhere, occur among the rows of the subarray $u_{j_1},u_{j_2},\ldots,u_{j_t}$.
Substituting these rows into \eqref{run} yields $\alpha_1=\alpha_2=\cdots=\alpha_t=0$. Hence the columns $u_{j_1},u_{j_2},\ldots,u_{j_t}$ are $\mathbb{F}_q$-linearly independent.
\end{proof}

%=====================================================================

\subsection{MDS and almost-MDS mixed orthogonal arrays}

Throughout this paper, where mixed orthogonal arrays $C$ are considered with the Hamming metric, the minimum Hamming distance of $C$ is denoted by $d_H$ or, equivalently, by $d_H(C)$.

The following theorem extends the Singleton bound for orthogonal arrays (see \cite[Theorem 4.20]{HSS2012}) to mixed orthogonal arrays.

\begin{Theorem}[The Singleton-type bound]\label{Singleton}
Let $C$ be an $\mathrm{MOA}(M, s, (q^{n_1},\dots, q^{n_s}), t)$ with the minimum Hamming distance $d_H$. Then 
\begin{equation}\label{eq:m61}
    \prod_{i=1}^{t} q^{n_i} \;\leq\; M \;\leq\; \prod_{i=d_H}^{s} q^{n_i} \;\leq\; \prod_{i=1}^{\,s-d_H+1} q^{n_i}.
\end{equation}
\end{Theorem}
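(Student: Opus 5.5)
The chain of inequalities in \eqref{eq:m61} splits into three separate claims, and I will prove them independently.

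\textbf{Lower bound.} I use the index formula \eqref{index} applied to the column subset $\{1,\dots,t\}$. The index $\lambda_{1,\dots,t}=M/q^{n_1+\cdots+n_t}$ counts how many times each $t$-tuple occurs in that subarray. By the definition of an MOA of strength $t$, every tuple in $\mathbb{F}_{q^{n_1}}\times\cdots\times\mathbb{F}_{q^{n_t}}$ occurs equally often. That common number is a positive integer, since the subarray has $M\ge 1$ rows and so at least one tuple occurs. Hence $\lambda_{1,\dots,t}\ge 1$, which gives $M\ge\prod_{i=1}^t q^{n_i}$.

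\textbf{Middle inequality.} This is the usual Singleton projection argument, run without any linearity. I puncture $C$ on its first $d_H-1$ columns, keeping columns $d_H,\dots,s$. Two distinct rows of $C$ differ in at least $d_H$ coordinates, so they cannot agree on all of the last $s-d_H+1$ columns. Therefore the projection of the rows onto these columns is injective.

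\textbf{A subtlety to check.} The MOA may repeat rows. If it does, then $d_H=0$ under the convention that distance is taken over pairs of distinct row indices, and the bound is vacuous or must be interpreted appropriately. So I will assume $C$ has distinct rows, or read $d_H$ as the minimum distance over distinct rows, with $d_H\ge1$. Under that reading, the injectivity gives
\[
M\le\Bigl|\prod_{i=d_H}^s\mathbb{F}_{q^{n_i}}\Bigr|=\prod_{i=d_H}^s q^{n_i}.
\]
This handling of $d_H$ and repeated rows is the only real obstacle I anticipate, and I will state the convention explicitly.

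\textbf{Last inequality.} This is purely combinatorial. Both products contain $s-d_H+1$ factors. The first uses the exponents $n_{d_H},\dots,n_s$, which are the smallest ones. The second uses $n_1,\dots,n_{s-d_H+1}$, which are the largest ones. Since $n_1\ge n_2\ge\cdots\ge n_s$, comparing termwise gives $n_{d_H-1+j}\le n_j$ for $j=1,\dots,s-d_H+1$. Summing these and exponentiating base $q$ finishes the proof.
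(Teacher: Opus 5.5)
Your proposal is correct and follows the paper's proof essentially step for step. The lower bound comes from the index $\lambda_{1,\dots,t}$ being a positive integer, the middle bound from deleting the first $d_H-1$ columns and using injectivity of the projection, and the last inequality from the termwise comparison $n_{d_H-1+j}\le n_j$.

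One caution on your side remark about repeated rows: only your first convention (distinct rows, equivalently $d_H\ge 1$ measured over row indices) works, and this is what the paper tacitly assumes. If rows may repeat and $d_H$ is measured over distinct row \emph{values}, injectivity bounds only the number of distinct rows, not $M$. For example, doubling the full factorial $\mathbb{F}_4\times\mathbb{F}_2\times\mathbb{F}_2$ gives $M=32>16$ with $d_H=1$.
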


\begin{proof}
For any selection of $t$ columns $\{i_1,\dots,i_t\}$, the corresponding index $\lambda_{i_1,\ldots,i_t}= M/q^{n_{i_1} + \cdots + n_{i_t}}$ is an integer by the definition of MOAs. Thus $q^{n_{i_1}+\cdots+n_{i_t}}\mid M $, and hence $q^{n_{i_1}+\cdots+n_{i_t}}\leq M$. In particular, selecting the $t$ columns $\{1,2,\dots,t\}$ establishes the left-hand side of \eqref{eq:m61}. Notice that since $n_1\ge n_2\ge\cdots\ge n_s$, the lower bound $\prod_{i=1}^{t} q^{n_i}$ is the largest among all bounds obtained from other selections of $t$ columns.

To prove the upper bound, delete the first $d_H-1$ columns of $C$. Let $C'$ denote the resulting array with $s-d_H+1$ columns. We claim that the rows of $C'$ are pairwise distinct. Suppose, to the contrary, that two distinct rows of $C'$ become identical after deleting the first $d_H-1$ columns. Then these two rows agree in all coordinates from $d_H$ through $s$, and hence they can differ only among the first $d_H-1$ coordinates. It follows that their Hamming distance is at most $d_H-1$, contradicting the definition of the minimum distance $d_H$ (where $d_H$ is the minimum Hamming distance between any two distinct rows).
Therefore, the rows of $C'$ are pairwise distinct, so $M=|C'|$. Since the entries in column $i$ take values in an alphabet of size $q^{n_i}$, we have
\begin{equation}
    M=|C'| \leq \prod_{i=d_H}^{s} q^{n_i}.
\end{equation}
Finally, since $n_1\ge\cdots\ge n_s$, we have
$$
n_{d_H}\le n_1,\quad n_{d_H+1}\le n_2,\,\,\dots,\,\, n_s\le n_{s-d_H+1},
$$
which gives
$$
\prod_{i=d_H}^s q^{n_i}\le\prod_{i=1}^{s-d_H+1} q^{n_i}.
$$
\end{proof}
We define the upper bound $\prod_{i=d_H}^{s} q^{n_i}$ as the Singleton-type bound. A mixed orthogonal array $\mathrm{MOA}(M, s, (q^{n_1},\dots, q^{n_s}), t)$ is said to be maximal distance separable (MDS) if it attains this bound, that is,
$$M=\prod_{i=d_H}^{s} q^{n_i}.$$

\begin{Remark}
In the case of orthogonal arrays with equal block sizes $n_1 = \cdots = n_s$,  we have $ \prod_{i=d_H}^{s} q^{n_i} = \prod_{i=1}^{\,s-d_H+1} q^{n_i}$, which is precisely the classical Singleton-type bound for orthogonal arrays; see \cite[Theorem 4.20]{HSS2012}. However, this equality does not hold in general, as shown in Example \ref{ma50}.
\end{Remark}

\begin{Example}\label{ma50}
In the mixed orthogonal array $\mathrm{MOA}(8,5,(2^2,2,2,2,2),2)$ given in Example \ref{e1}, the minimum distance is $d_H = 3$. Thus
$$
\prod_{i=1}^{t} q^{n_i} =4\cdot 2 = 8= M=2\cdot 2\cdot 2=  \prod_{i=d_H}^{s} q^{n_i} < 16=4\cdot 2\cdot 2=\prod_{i=1}^{s-d_H+1} q^{n_i}.
$$
This example shows that the lower and upper bounds in \eqref{eq:m61} are satisfied, while the last inequality is strict. 
\end{Example}

Recall that for any selection of $t$ columns $\{i_1,\dots,i_t\}$ of the $\mathrm{MOA}$, the corresponding index is denoted by $\lambda_{i_1,\dots,i_t}$.
We define the minimum index by
$$
\lambda_{\min}:=\min_{\{i_1,\dots,i_t\}\subseteq\{1,\dots,s\}}
\lambda_{i_1,\dots,i_t}.
$$
Under the assumption $n_1 \ge n_2 \ge \cdots \ge n_s$, the quantity $q^{n_{i_1}+\cdots+n_{i_t}}$ is maximized when $\{i_1,\dots,i_t\}=\{1,2,\dots,t\}$. Therefore,
$$
\lambda_{\min}
= \min_{\{i_1,\dots,i_t\}}
\frac{M}{q^{n_{i_1}+\cdots+n_{i_t}}}
= \frac{M}{\max\limits_{\{i_1,\dots,i_t\}} q^{n_{i_1}+\cdots+n_{i_t}}}
= \frac{M}{\prod_{i=1}^{t} q^{n_i}}.
$$
In the case of OAs with equal block sizes $n_1=\cdots=n_s$, the minimum index $\lambda_{\min}$ coincides with the usual index $\lambda$. In this setting it is known that $d_H = s - t + 1$ if and only if the array is MDS if and only if $\lambda = 1$; see \cite[Theorems 4.20 and 4.21]{HSS2012}. For mixed orthogonal arrays, however, these equivalences need not hold in general. The following corollary, which follows from Theorem~\ref{Singleton}, identifies conditions under which the equivalence between the MDS property and $\lambda_{\min}=1$ remains valid.

\begin{Corollary}\label{ma60}
 Let $C$ be an $\mathrm{MOA}(M, s, (q^{n_1},\dots, q^{n_s}), t)$ with the minimum Hamming distance $d_H$.
 \begin{enumerate}
     \item If  $d_H = s - t + 1$, then $C$ is $\mathrm{MDS}$ if and only if $\lambda_{\min}=1$. 
    \item If $\sum_{j=d_H}^{s} n_j= \sum_{i=1}^{t} n_i $, then $C$ is $\mathrm{MDS}$ if and only if $\lambda_{\min}=1$.
  \end{enumerate}
\end{Corollary}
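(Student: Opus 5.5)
Both parts reduce to comparing the exponent of the Singleton-type bound with the exponent of $\prod_{i=1}^t q^{n_i}$, using the identity $\lambda_{\min}=M/\prod_{i=1}^{t}q^{n_i}$ established just before the statement. By definition, $C$ is MDS if and only if $M=\prod_{i=d_H}^{s}q^{n_i}$, and $\lambda_{\min}=1$ if and only if $M=\prod_{i=1}^{t}q^{n_i}$. The plan is therefore to show that, under either hypothesis, the two products coincide:
\[
\prod_{i=d_H}^{s} q^{n_i}=\prod_{i=1}^{t} q^{n_i}.
\]
Once this holds, the two defining equalities for $M$ are the same condition, and the equivalence follows.

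For part 2, the hypothesis $\sum_{j=d_H}^{s}n_j=\sum_{i=1}^{t}n_i$ is exactly equality of the exponents, so there is nothing more to do. Theorem~\ref{Singleton} also shows consistency: it gives $\prod_{i=1}^{t}q^{n_i}\le M\le\prod_{i=d_H}^{s}q^{n_i}$, so if the outer terms coincide, $M$ is squeezed and both properties actually hold.

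For part 1, with $d_H=s-t+1$ we have
\[
\prod_{i=d_H}^{s}q^{n_i}=\prod_{i=s-t+1}^{s}q^{n_i},
\]
a product of the $t$ \emph{smallest} block exponents, whereas $\prod_{i=1}^{t}q^{n_i}$ is the product of the $t$ largest. The main obstacle is that these are not equal for arbitrary block sizes, so the hypothesis $d_H=s-t+1$ must force more.

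I would combine Theorem~\ref{Singleton} with the lower bound:
\[
\prod_{i=1}^{t}q^{n_i}\le M\le \prod_{i=s-t+1}^{s}q^{n_i}.
\]
Since $n_1\ge\cdots\ge n_s$, we have termwise $n_{s-t+i}\le n_i$, so the chain can only hold if all these inequalities are equalities. Hence $n_i=n_{s-t+i}$ for $i=1,\dots,t$, and $M$ equals the common value. Consequently, whenever $d_H=s-t+1$ is attained, $C$ is both MDS and has $\lambda_{\min}=1$, and in particular the two conditions are equivalent.

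If the paper intends the statement as a mere equivalence, the argument is just that the two products agree, which is precisely what the squeeze gives. The delicate point is to note explicitly that the squeeze forces the exponents to agree rather than assuming it. The same squeeze argument also settles part 2 directly.
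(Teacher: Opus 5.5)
Your proof is correct and is essentially the paper's own argument, since the paper says the corollary follows directly from Theorem~\ref{Singleton}: in part 1 the chain $\prod_{i=1}^{t}q^{n_i}\le M\le\prod_{i=d_H}^{s}q^{n_i}\le\prod_{i=1}^{s-d_H+1}q^{n_i}$ closes up because $s-d_H+1=t$, and in part 2 the hypothesis makes the outer terms equal, so in both cases $M$ equals both products. Your extra remark is also correct and sharper than the stated equivalence: under either hypothesis both properties actually hold, and in part 1 the hypothesis forces $n_i=n_{s-t+i}$ for all $i$.
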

%In logic, $P \iff Q$ is true if both $P$ and $Q$ are always true.

\begin{Remark}\label{o-d}
\begin{enumerate}
    \item Example \ref{ma50} shows that the converse of the first statement in Corollary \ref{ma60} is not true in general. 
    \item In the case of OAs with equal block sizes $n_1=\cdots=n_s$, the condition $\sum_{j=d_H}^{s} n_j= \sum_{i=1}^{t} n_i $ is equivalent to $d_H = s - t + 1$.
    \item Example \ref{Lh} shows that without the condition $\sum_{j=d_H}^{s} n_j= \sum_{i=1}^{t} n_i $, the last statement in Corollary \ref{ma60} is not true in general. 
\end{enumerate}
\end{Remark}

\begin{Example}\label{Lh}
Consider the following array in which  the first  column takes values in $\mathbb F_4$ and the other columns take values in $\mathbb F_2$ (transposed):
$$
\begin{array}{cccccccc}
0 & 0 & 1 & 1 & \alpha & \alpha & \alpha+1 & \alpha+1 \\
0 & 1 & 0 & 1 & 0 & 1 & 0 & 1 \\
0 & 1 & 0 & 1 & 1 & 0 & 1 & 0
\end{array}.
$$

This array is an $\mathrm{MOA}(8,3,(2^2,2,2),2)$. One checks that $\lambda_{1,2}=1=\lambda_{1,3}$ and $\lambda_{2,3}=2$, and therefore $\lambda_{\min}=1$.
However, the minimum Hamming distance is $d_H=1$, and thus $M=8<16=2^{2}\cdot 2\cdot 2=\prod_{i=d_H}^{s} q^{n_i},$ which means $C$ is not MDS.
This array therefore shows that $\lambda_{\min}=1$ does not necessarily imply that $C$ is MDS. 
We now give an example illustrating that the converse of Corollary \ref{ma60} is also not true in general. Consider the following array in which the first column takes values in $\mathbb F_4$, and the other two columns take values in $\mathbb F_2$ (transposed):
$$
\begin{array}{cccccccccccccccc}
0 & 0 & 0 & 0 & 1 & 1 & 1 & 1 & \alpha & \alpha & \alpha & \alpha & \alpha+1 & \alpha+1 & \alpha+1 & \alpha+1 \\
0 & 0 & 1 & 1 & 0 & 0 & 1 & 1 & 0 & 0 & 1 & 1 & 0 & 0 & 1 & 1 \\
0 & 1 & 0 & 1 & 0 & 1 & 0 & 1 & 0 & 1 & 0 & 1 & 0 & 1 & 0 & 1
\end{array}.
$$
This array is an $\mathrm{MOA}(16,3,(2^2,2,2),2)$. Here $\lambda_{1,2}=\lambda_{1,3}=2$ and $\lambda_{2,3}=4$, which leads to $\lambda_{\min}=2$.
The minimum Hamming distance is $d_H=1$. Since $M=16=\prod_{i=d_H}^{s} q^{n_i}$, the array attains the Singleton-type bound and therefore $C$ is MDS.
\end{Example}

Let $C$ be an $\mathrm{MOA}(M,s,(q^{n_1},\dots,q^{n_s}),t)$ with minimum Hamming distance $d_H$. Define the Singleton defect of $C$ as
\begin{equation}
\delta(C) := \sum_{i=d_H}^s n_i - \log_q M.
\end{equation}
By \eqref{eq:m61}, we have $\delta(C)\ge 0$. Moreover, $C$ is MDS  if and only if $\delta(C) = 0$. We say that $C$ is almost-MDS if $\delta(C) = 1$, equivalently,
\begin{equation}
M = \frac{1}{q} \prod_{i=d_H}^{s} q^{n_i}.
\end{equation}

\begin{Theorem}\label{th:nearcon}
Let $C$ be a mixed orthogonal array $\mathrm{MOA}(M,s,(q^{n_1},\dots,q^{n_s}),t)$ with minimum Hamming distance $d_H$. If $ \sum_{i=1}^{t}n_i=\sum_{i=d_H}^{s}n_i-1,$ then $C$ is almost-MDS if and only if $\lambda_{\min}= 1$.
\end{Theorem}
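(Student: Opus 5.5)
The argument is a direct computation. It combines the closed form of the minimum index with the definition of the Singleton defect, in the same way Corollary~\ref{ma60} follows from Theorem~\ref{Singleton}.

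First, recall from the discussion preceding Corollary~\ref{ma60} that, under the ordering $n_1\ge\cdots\ge n_s$, the minimum index is
$$\lambda_{\min}=\frac{M}{\prod_{i=1}^{t}q^{n_i}}=\frac{M}{q^{\sum_{i=1}^t n_i}}.$$
Next, write $M$ in terms of the defect: $\delta(C)=\sum_{i=d_H}^s n_i-\log_q M$ gives
$$M=q^{\sum_{i=d_H}^s n_i-\delta(C)}.$$
Substituting the hypothesis $\sum_{i=1}^t n_i=\sum_{i=d_H}^s n_i-1$ then yields
$$\lambda_{\min}=q^{\sum_{i=d_H}^s n_i-\delta(C)-\sum_{i=1}^t n_i}=q^{1-\delta(C)}.$$

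Both directions now follow at once. If $C$ is almost-MDS, then $\delta(C)=1$, so $\lambda_{\min}=q^0=1$. Conversely, if $\lambda_{\min}=1$, then $q^{1-\delta(C)}=1$. Since $q\ge 2$, this forces $\delta(C)=1$, so $C$ is almost-MDS.

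There is no real obstacle. The only point to check is that $\log_q M$ is well defined and the exponent manipulation is legitimate. This holds because $\lambda_{\min}$ is a positive integer, hence $M=\lambda_{\min}q^{\sum_{i=1}^t n_i}$. The equivalence $\lambda_{\min}=1\iff\delta(C)=1$ is then exactly the equality case of the exponent identity. I would also note that Theorem~\ref{Singleton} gives $\delta(C)\ge0$, so under this hypothesis $\lambda_{\min}\in\{1,q\}$; this is a side remark and is not needed for the equivalence.
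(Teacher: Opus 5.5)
Your proof is correct and is essentially the paper's argument. The paper likewise substitutes $M=\frac{1}{q}\prod_{i=d_H}^{s}q^{n_i}$ (respectively $M=q^{\sum_{i=1}^{t}n_i}$) into $\lambda_{\min}=M/\prod_{i=1}^{t}q^{n_i}$ and applies the hypothesis; your identity $\lambda_{\min}=q^{1-\delta(C)}$ just handles both directions at once.

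One caveat on your side remark: $\delta(C)\ge 0$ only gives $1\le\lambda_{\min}\le q$, not $\lambda_{\min}\in\{1,q\}$, because $M$ need not be a power of $q$ for a non-linear array. For example, the $\mathrm{MOA}(6,2,(3,3),1)$ with rows $(0,0),(1,1),(2,2),(0,1),(1,2),(2,0)$ satisfies the hypothesis with $d_H=1$ yet has $\lambda_{\min}=2$. The claim does hold for $\mathbb{F}_q$-linear MOAs.
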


\begin{proof}
Suppose that $C$ is almost-MDS. Then
$$
\lambda_{\min}=\frac{M}{\prod_{i=1}^{t} q^{n_i}}=\frac{\frac{1}{q} \prod_{i=d_H}^{s} q^{n_i}}{\prod_{i=1}^{t} q^{n_i}}=\frac{q^{\sum_{i=d_H}^{s}n_i-1}}{q^{\sum_{i=1}^{t}n_i}}=1.
$$
Conversely, suppose that $\lambda_{\min}=1$, i.e., $M=q^{\sum_{i=1}^{t}n_i}$. Using the hypothesis, $M = q^{(\sum_{i=d_H}^{s}n_i) - 1}$, which gives the result.
\end{proof}

\begin{Example}\label{ex:nearMDS}
Consider the mixed orthogonal array $\mathrm{MOA}\bigl(16,4,(2^2,2,2,2),3\bigr)$ (transposed):
$$
\begin{array}{cccccccccccccccc}
0 & \alpha & \alpha & 0 & \alpha & 0 & 0 & \alpha & 1 & \alpha+1 & \alpha+1 & 1 & \alpha+1 & 1 & 1 & \alpha+1 \\
0 & 0 & 0 & 0 & 1 & 1 & 1 & 1 & 0 & 0 & 0 & 0 & 1 & 1 & 1 & 1 \\
0 & 0 & 1 & 1 & 0 & 0 & 1 & 1 & 0 & 0 & 1 & 1 & 0 & 0 & 1 & 1 \\
0 & 1 & 0 & 1 & 0 & 1 & 0 & 1 & 0 & 1 & 0 & 1 & 0 & 1 & 0 & 1
\end{array}.
$$
The minimum Hamming distance of $C$ is $d_H=1$. In addition, $\sum_{i=1}^t n_i =4$ and $\sum_{i=d_H}^{s} n_i -1=4$. Furthermore,
$$
\lambda_{\min}=\frac{M}{\prod_{i=1}^{t}2^{n_i}}=\frac{16}{2^{4}}=1,
\qquad
\delta(C)=\sum_{i=1}^4 n_i-\log_2 16=1.
$$
Hence $C$ is almost-MDS with $\lambda_{\min}=1$.
\end{Example}
%=====================================================================
%Section: The duality of $\mathbb F_q$-linear MOAs
%=====================================================================

\section{Duality of $\mathbb F_q$-linear mixed orthogonal arrays}\label{S:1-3}
In this section, we define the trace duality for $\mathbb F_q$-linear MOAs, and we will show how this duality is related to the Euclidean dual of $\mathbb F_q$-linear error-block codes. 

Any extension field $\mathbb{F}_{q^{m}}$ has an $\mathbb F_q$-basis $\mathcal{B} = \{\beta_1, \beta_2,\ldots, \beta_m\}$ such that every element $x\in \mathbb F_{q^{m}}$ can be written as $x = \sum_{i=1}^{m} x_i \beta_i$ with $x_i\in \mathbb F_q$. The Gram matrix of $\mathcal{B}$ is an $m\times m$ matrix $M$ with entries  $M_{ij}:=\operatorname{Tr}(\beta_i\beta_j)$, where $ \operatorname{Tr} $ is the trace map  from $\mathbb{F}_{q^{m}}$ to $\mathbb{F}_q $ defined by $\operatorname{Tr}(x) =  x+ x^q + \cdots + x^{q^{m-1}}$ for any $x\in \mathbb F_{q^m}$. In particular, $\operatorname{Tr}(x)=x$ for all $x\in\mathbb F_q$. We define an $\mathbb F_q$-linear map
\begin{align}\label{rho-map}
 \rho : \mathbb{F}_{q^m} &\longrightarrow \mathbb F_q^{m}  \\ 
 x = \sum_{i=1}^{m} x_i \beta_i&\longmapsto (x_1, x_2, \dots, x_m)\nonumber.
\end{align}

\begin{Lemma}
Let $x, y\in \mathbb F_{q^m}$ and $M$ be the Gram matrix of $\mathcal{B}$. Then $\operatorname{Tr}(xy)=\rho(x) M \rho(y)^{T}$, where $T$ denotes the transpose.
\end{Lemma}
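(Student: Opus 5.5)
The plan is to expand both $x$ and $y$ in the basis $\mathcal{B}$ and use that the trace is $\mathbb{F}_q$-linear. Write $x=\sum_{i=1}^m x_i\beta_i$ and $y=\sum_{j=1}^m y_j\beta_j$ with $x_i,y_j\in\mathbb F_q$, so that $\rho(x)=(x_1,\dots,x_m)$ and $\rho(y)=(y_1,\dots,y_m)$ by the definition \eqref{rho-map}. Multiplying out gives
\[
xy=\sum_{i=1}^m\sum_{j=1}^m x_iy_j\,\beta_i\beta_j .
\]

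The key step is applying $\operatorname{Tr}$ to this double sum. The trace map $\operatorname{Tr}(z)=z+z^q+\cdots+z^{q^{m-1}}$ is additive, since Frobenius $z\mapsto z^q$ is additive in characteristic $p$. It is also $\mathbb F_q$-homogeneous: for $a\in\mathbb F_q$ we have $a^{q^k}=a$, so $\operatorname{Tr}(az)=a\operatorname{Tr}(z)$. Since each coefficient $x_iy_j$ lies in $\mathbb F_q$, this yields
\[
\operatorname{Tr}(xy)=\sum_{i=1}^m\sum_{j=1}^m x_i\operatorname{Tr}(\beta_i\beta_j)\,y_j=\sum_{i,j}x_iM_{ij}y_j .
\]

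The last expression is exactly the matrix product $\rho(x)\,M\,\rho(y)^T$ of a $1\times m$ row vector, the $m\times m$ Gram matrix and an $m\times 1$ column vector. This completes the argument.

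There is no real obstacle here. The only point needing care is justifying the $\mathbb F_q$-linearity of $\operatorname{Tr}$, which uses that Frobenius fixes $\mathbb F_q$ and is additive.
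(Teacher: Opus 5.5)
Your proposal is correct and follows the paper's proof essentially verbatim: expand $x$ and $y$ in $\mathcal{B}$, use $\mathbb F_q$-linearity of $\operatorname{Tr}$, and recognize $\sum_{i,j}x_iM_{ij}y_j$ as $\rho(x)M\rho(y)^T$. Your explicit justification of that linearity (Frobenius is additive and fixes $\mathbb F_q$) is a small detail the paper uses without comment.
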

\begin{proof}
Let $x=\sum_{i=1}^m x_i\beta_i$ and $ y=\sum_{j=1}^m y_j\beta_j$. Thus $xy=\sum_{i=1}^m\sum_{j=1}^m x_iy_j\beta_i\beta_j$, and hence
\begin{align*}
\operatorname{Tr}(xy)
&=\sum_{i=1}^{m}\sum_{j=1}^{m} x_i y_j\,\operatorname{Tr}(\beta_i\beta_j)\\
&=\sum_{i=1}^{m}\sum_{j=1}^{m} x_i M_{ij} y_j=\begin{pmatrix} x_1 & \cdots & x_m \end{pmatrix}
M
\begin{pmatrix} y_1 \\ \vdots \\ y_m \end{pmatrix}
=\rho(x)\,M\,\rho(y)^T .
\end{align*}
\end{proof}

The $\mathbb F_q$-basis $\mathcal{B} = \{\beta_1, \beta_2,\ldots, \beta_m\}$ is called self-dual if  $\operatorname{Tr}(\beta_i\beta_j)=\delta_{i,j}$, where $\delta_{i,j}$ is  the Kronecker delta. For example, The basis $\{\alpha, \alpha^2\}$ of the field extension $ \mathbb F_4 = \mathbb F_2[\alpha]/\langle 1 + \alpha + \alpha^2 \rangle $ is self-dual because $\operatorname{Tr}(\alpha \alpha)=\operatorname{Tr}(\alpha^2 \alpha^2)=1$ and $\operatorname{Tr}(\alpha \alpha^2)=\operatorname{Tr}(\alpha^2 \alpha)=0$. If $\mathbb{F}_{q^m}$ has a self-dual basis, then the Gram matrix of the basis is the identity matrix.  It is well known that $\mathbb{F}_{q^m}$ admits a self-dual basis over $\mathbb{F}_q$ if and only if either $q$ is even or both $q$ and $m$ are odd; see \cite{Seroussi1980}.

\begin{Corollary}
Suppose that either $q$ is even or both $m$ and $q$ are odd. Then for any $x, y\in \mathbb F_{q^m}$,
$$\operatorname{Tr}(xy)=\rho(x) \rho(y)^{T}.$$
\end{Corollary}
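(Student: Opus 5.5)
The corollary follows directly from the preceding lemma, which gives $\operatorname{Tr}(xy)=\rho(x)\,M\,\rho(y)^T$ for an arbitrary $\mathbb F_q$-basis $\mathcal{B}$ of $\mathbb{F}_{q^m}$ with Gram matrix $M$. So it suffices to choose $\mathcal{B}$ so that $M$ is the identity matrix.

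First, under the hypothesis that $q$ is even, or that $q$ and $m$ are both odd, I will invoke the cited result of Seroussi and Lempel \cite{Seroussi1980}. It guarantees a self-dual basis $\mathcal{B}=\{\beta_1,\dots,\beta_m\}$ of $\mathbb{F}_{q^m}$ over $\mathbb F_q$. I then fix this basis in the definition \eqref{rho-map} of $\rho$.

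Second, by the definition of self-duality, $M_{ij}=\operatorname{Tr}(\beta_i\beta_j)=\delta_{i,j}$, so $M=I_m$. Substituting into the lemma gives
$$\operatorname{Tr}(xy)=\rho(x)\,I_m\,\rho(y)^T=\rho(x)\rho(y)^T$$
for all $x,y\in\mathbb{F}_{q^m}$.

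There is no genuine obstacle here. The only point needing care is that the statement implicitly takes $\rho$ to be defined with respect to a self-dual basis; for an arbitrary basis the identity fails. I will therefore state this choice explicitly at the start of the proof. The existence of such a basis is exactly the cited theorem, which I treat as a black box.
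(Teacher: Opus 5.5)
Your proposal is correct and follows the paper's own route: the Seroussi--Lempel result gives a self-dual basis, its Gram matrix is $I_m$, and substituting into the preceding lemma yields the identity. Your explicit remark that $\rho$ must be defined with respect to that self-dual basis is a worthwhile clarification, since the paper leaves this implicit.
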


For each $1\leq i\leq s$, let $\mathcal{B}_i$ be an $\mathbb F_q$-basis of $\mathbb F_{q^{n_i}}$,  and $\rho_i : \mathbb{F}_{q^{n_i}} \to\mathbb F_q^{n_i}$ be the $\mathbb F_q$-linear map associated with $\mathcal{B}_i$ defined in  \eqref{rho-map}. We extend the $\mathbb F_q$-linear map $\rho$ component-wise to the product space. 
By abuse of notation, we denote the extended map again by $\rho$:

\begin{align}\label{ma12}
    \rho\colon \mathbb{F}_{q^{n_1}}\times\mathbb{F}_{q^{n_2}}\times\cdots\times\mathbb{F}_{q^{n_s}} 
    &\longrightarrow
    \mathbb{F}_q^{n_1}\times\mathbb{F}_q^{n_2}\times\cdots\times\mathbb{F}_q^{n_s}=\mathbb{F}_q^n\\
    \mathbf{a}=(a_1, a_2, \ldots, a_s)
    &\longmapsto \Big(\rho_1(a_1), \rho_2(a_2), \ldots, \rho_s(a_s)\Big).\nonumber
\end{align}
We note that $\rho$ maps the ambient space of mixed orthogonal arrays to the ambient space of $\mathbb F_q$-linear error-block codes.

We define the trace inner product on $\mathbb{F}_{q^{n_1}}\times\cdots\times\mathbb{F}_{q^{n_s}}$ by
    $$
    \langle \mathbf{a},\,\mathbf{b}\rangle_{\operatorname{Tr}} = \sum_{i=1}^s \operatorname{Tr}_i(a_ib_i),
    $$
where $\mathbf{a} = (a_1, \dots, a_s)$ and $\mathbf{b} = (b_1, \dots, b_s)$ are in $\mathbb{F}_{q^{n_1}}\times\cdots\times\mathbb{F}_{q^{n_s}}$, and $ \operatorname{Tr}_{i} : \mathbb{F}_{q^{n_i}} \to \mathbb{F}_q $ denotes the trace map $\operatorname{Tr}_{i}(a) = a + a^q + \cdots + a^{q^{n_i-1}}.$
For a mixed orthogonal array $C$, we define the trace dual as
$$
C^{\perp_{\text{Tr}}} = \Big\{ \mathbf{b} \in \mathbb{F}_{q^{n_1}}\times\cdots\times\mathbb{F}_{q^{n_s}} \colon \langle \mathbf{c},\, \mathbf{b}\rangle_{\operatorname{Tr}}= 0,\, \text{ for all }\, \mathbf{c}  \in C \Big\}.
$$
The next lemma is immediate.
\begin{Lemma}\label{ma35} With the above notation, the following statements hold.
    \begin{enumerate}
     \item The map $\rho$ is an $\mathbb F_q$-linear isomorphism.
      \item Let $M_i$ be the Gram matrix of the $\mathbb{F}_q$-basis $\mathcal{B}_i$. Then
    $$\langle \mathbf{a},\,\mathbf{b}\rangle_{\operatorname{Tr}} = \sum_{i=1}^s \rho_i(a_i) M_i\rho_i(b_i)^T.$$
    \item Let $M=\operatorname{diag}(M_1,M_2,\ldots,M_s)$ be the block diagonal matrix. Then
    $$\langle \mathbf{a},\,\mathbf{b}\rangle_{\operatorname{Tr}} = \rho(\mathbf{a})M\rho(\mathbf{b})^T.$$  
\end{enumerate}
\end{Lemma}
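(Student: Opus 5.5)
The three parts will be proved in order, since part 3 follows from part 2.

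\emph{Part 1.} The map $\rho$ is defined componentwise, so it suffices to show that each $\rho_i\colon\mathbb F_{q^{n_i}}\to\mathbb F_q^{n_i}$ is an $\mathbb F_q$-linear isomorphism. For linearity, take $x=\sum_j x_j\beta_j$, $y=\sum_j y_j\beta_j$ and $\alpha\in\mathbb F_q$. Then $\alpha x+y=\sum_j(\alpha x_j+y_j)\beta_j$. Because $\mathcal B_i$ is a basis, these coordinates are the unique ones, so $\rho_i(\alpha x+y)=\alpha\rho_i(x)+\rho_i(y)$. Uniqueness of coordinates also gives injectivity. For surjectivity, any $(x_1,\dots,x_{n_i})\in\mathbb F_q^{n_i}$ is the image of $\sum_j x_j\beta_j$. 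A direct product of linear isomorphisms is a linear isomorphism onto the product space, and this product is $\mathbb F_q^{n_1}\times\cdots\times\mathbb F_q^{n_s}=\mathbb F_q^n$. So $\rho$ is an isomorphism.

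\emph{Part 2.} Apply the preceding lemma, $\operatorname{Tr}(xy)=\rho(x)M\rho(y)^T$, to each field $\mathbb F_{q^{n_i}}$ with trace $\operatorname{Tr}_i$, basis $\mathcal B_i$ and Gram matrix $M_i$. This gives $\operatorname{Tr}_i(a_ib_i)=\rho_i(a_i)M_i\rho_i(b_i)^T$. Summing over $i$ and using the definition of $\langle\cdot,\cdot\rangle_{\operatorname{Tr}}$ yields the formula.

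\emph{Part 3.} Write $\rho(\mathbf a)=(\rho_1(a_1),\dots,\rho_s(a_s))$ as a row vector split into blocks of lengths $n_1,\dots,n_s$. Since $M=\operatorname{diag}(M_1,\dots,M_s)$ is block diagonal with matching block sizes, block matrix multiplication gives
\[
\rho(\mathbf a)M\rho(\mathbf b)^T=\sum_{i=1}^s\rho_i(a_i)M_i\rho_i(b_i)^T .
\]
All cross terms vanish because the off-diagonal blocks of $M$ are zero. By part 2 this equals $\langle\mathbf a,\mathbf b\rangle_{\operatorname{Tr}}$.

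None of these steps is a real obstacle. The only point needing care is in part 3: the block partition of $\rho(\mathbf a)$ must match the block sizes of $M$, which holds because both are determined by $n_1,\dots,n_s$.
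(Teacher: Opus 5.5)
Your proof is correct and supplies exactly the routine verification the paper intends when it calls this lemma immediate (the paper gives no written proof). You show each coordinate map $\rho_i$ is an isomorphism and take the product, apply the preceding Gram-matrix lemma to each $\mathbb{F}_{q^{n_i}}$ and sum, and then use block-diagonal multiplication for part 3.
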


A mixed orthogonal array is called $\mathbb F_q$-linear if its set of rows forms an $\mathbb F_q$-linear subspace of $\mathbb{F}_{q^{n_1}}\times\mathbb{F}_{q^{n_2}}\times\cdots\times\mathbb{F}_{q^{n_s}}.$ The following theorem shows that the image of an $\mathbb F_q$-linear MOA under $\rho$ is an $\mathbb F_q$-linear error-block code and that $\rho$ preserves both distance and duality between $\mathbb F_q$-linear MOAs and $\mathbb F_q$-linear error-block codes. These results play an important role in the subsequent sections.

\begin{Theorem}\label{ma15} Let $C$ be an $\mathbb{F}_q$-linear $\mathrm{MOA}$. Then the following statements hold.
\begin{enumerate}
    \item $\rho(C)$ is an $\mathbb F_q$-linear error-block code.
    \item If $d_H$ denotes the Hamming distance and $d_\pi$ the minimum $\pi$-distance, then
    $$d_H(C)=d_\pi\big(\rho(C)\big).$$
    \item If either $q$ is even or $q$ is odd and $n_i$ is odd for all $i$, then $\rho$ preserves duality, that is
    $$\rho\big(C^{\perp_{\operatorname{Tr}}}\big)=\big(\rho(C)\big)^{\perp}.$$
    \item If $n_i=1$ for all $i\in\{1, 2, \ldots, s\}$, then $C^{\perp_{\text{Tr}}} $ is the Euclidean dual.
\end{enumerate}  
\end{Theorem}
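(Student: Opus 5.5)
The plan is to handle the four statements in order. Each one reduces to Lemma~\ref{ma35} together with the fact that $\rho$ acts block by block.

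For (1), Lemma~\ref{ma35}(1) says $\rho$ is an $\mathbb F_q$-linear isomorphism. Hence the image of the $\mathbb F_q$-subspace $C$ is an $\mathbb F_q$-subspace of $\mathbb F_q^n=\mathbb F_q^{n_1}\times\cdots\times\mathbb F_q^{n_s}$, which is an $\mathbb F_q$-linear error-block code of type $\pi=[n_1]\cdots[n_s]$. We also get $\dim_{\mathbb F_q}\rho(C)=\log_q M$.

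For (2), the key observation is that each $\rho_i$ is injective, so $a_i\neq 0$ if and only if $\rho_i(a_i)\neq 0$. Therefore, for every $\mathbf a$, the Hamming weight of $\mathbf a$ equals $w_\pi(\rho(\mathbf a))$. Since $C$ and $\rho(C)$ are both linear, each minimum distance equals the corresponding minimum nonzero weight, and $\rho$ gives a bijection between nonzero elements. This yields $d_H(C)=d_\pi(\rho(C))$.

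For (3), under the stated hypothesis each $\mathbb F_{q^{n_i}}$ has a self-dual basis by \cite{Seroussi1980}. The proof must take $\mathcal B_i$ to be self-dual; the statement implicitly assumes this choice, and I would make it explicit. Then every $M_i$ is the identity, so $M=I_n$. Lemma~\ref{ma35}(3) then gives $\langle\mathbf a,\mathbf b\rangle_{\operatorname{Tr}}=\rho(\mathbf a)\rho(\mathbf b)^T$, the Euclidean product. Consequently $\mathbf b\in C^{\perp_{\operatorname{Tr}}}$ if and only if $\rho(\mathbf b)$ is Euclidean-orthogonal to every $\rho(\mathbf c)$, that is, $\rho(\mathbf b)\in\rho(C)^\perp$. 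Surjectivity of $\rho$ ensures every element of $\rho(C)^\perp$ is reached, so the two sets are equal. The main subtlety sits here: without a self-dual basis one only gets $\rho(C^{\perp_{\operatorname{Tr}}})=(\rho(C)M)^{\perp}$. So the basis choice is the one point needing care.

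For (4), when $n_i=1$ each $\operatorname{Tr}_i$ is the identity on $\mathbb F_q$. The trace inner product is then $\sum_i a_ib_i$, the standard Euclidean inner product, and the claim follows immediately.
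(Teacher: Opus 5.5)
Your proposal is correct and follows the paper's proof essentially step for step: Lemma~\ref{ma35} for (1), injectivity of each $\rho_i$ giving $w_H(\mathbf a)=w_\pi(\rho(\mathbf a))$ for (2), self-dual bases forcing $M=I$ for (3), and $\operatorname{Tr}_i=\mathrm{id}$ when $n_i=1$ for (4). Your remarks make explicit two points the paper leaves implicit: that $\rho$ must be built from self-dual bases $\mathcal B_i$ for (3) to hold, and that the reduction from distance to weight uses linearity.
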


\begin{proof}
The statement 1 follows immediately from Lemma~\ref{ma35}. To prove the second statement, let $\mathbf a=(a_1,\dots,a_s)\in
\mathbb F_{q^{n_1}}\times\cdots\times\mathbb F_{q^{n_s}}$. For any $1\leq i\leq s$, we know $\rho_i$ is an $\mathbb F_q$-isomorphism. Thus $\rho_i(a_i)=0$ if and only if $a_i=0$. 
The $\pi$-weight $w_\pi(\rho(\mathbf a))$ equals the number of indices $j$ for which the $j$-th block $\rho_j(a_j)$ is non-zero. Hence,
$$
w_H(\mathbf a)=\#\{j:a_j\neq0\}=\#\{j:\rho_j(a_j)\neq0\}=w_\pi(\rho(\mathbf a)).
$$  
For the statement 3, we note that if either $q$ is even or $q$ is odd and $n_i$ is odd
for all $i$, then each field $\mathbb{F}_{q^{n_i}}$ admits a self-dual
$\mathbb{F}_q$-basis $\mathcal B_i$. The corresponding Gram matrices of $\mathcal B_i$ satisfy $M_i=I$. Hence Lemma \ref{ma35} gives $\langle \mathbf a,\mathbf b\rangle_{\operatorname{Tr}}=\rho(\mathbf a)\rho(\mathbf b)^T,$ which proves the result.
To prove the last item, we notice that the trace maps $\operatorname{Tr}_i:\mathbb F_{q^{n_i}}\to\mathbb F_q$ act as the identity map over $\mathbb F_q$. Since $n_i=1$, the maps $\operatorname{Tr}_i$ are the identity maps. So $\langle\, ,\,\rangle_{\operatorname{Tr}}$ is the usual inner product, which implies $C^{\perp_{\text{Tr}}} $ is the Euclidean dual.
\end{proof}

%=====================================================================
%Section: $\mathbb F_q$-linear MOAs
%=====================================================================

\section{$\mathbb F_q$-linear mixed orthogonal arrays}

This section establishes a structural correspondence between $\mathbb{F}_q$-linear MOAs and $\mathbb{F}_q$-linear error-block codes, providing an analogue to the classical duality between orthogonal arrays and linear codes (\cite[Theorem 4.6]{HSS2012}). The main results of this section are Theorems~\ref{necessary-condition} and~\ref{th:so}. Specifically, Theorem~\ref{necessary-condition} asserts that the image of an $\mathbb{F}_q$-linear MOA under the map $\rho$ is an $\mathbb{F}_q$-linear error-block code, whose dual code has $\pi$-distance bounded below by $t+1$. Conversely, Theorem~\ref{th:so} demonstrates that the inverse image $\rho^{-1}(C)$ of an $\mathbb{F}_q$-linear error-block code $C$ is an $\mathbb{F}_q$-linear MOA with strength $d_\pi(C^\perp) - 1$.

To prove these main results, we first establish Theorem~\ref{ma20}, which generalizes the independence-strength correspondence of orthogonal arrays (\cite[Theorem 3.29]{HSS2012}) to the mixed setting. As the converse to Theorem~\ref{ma16}, Theorem~\ref{ma20} asserts that if every set of $t$ columns of an $\mathbb F_q$-linear array is $\mathbb F_q$-linearly independent, then the array is an $\mathbb F_q$-linear MOA of strength $t$. The proof of this theorem relies on the following two auxiliary lemmas.

\begin{Lemma}\label{surjective}
Let $C$ be an $\mathbb F_q$-linear error-block code in $ \mathbb{F}_q^n =\mathbb{F}_q^{n_1}\times \mathbb{F}_q^{n_2} \times \ldots\times \mathbb{F}_q^{n_s}$ with the Euclidean dual $C^\perp$. Let $\{i_1,\ldots,i_t\}\subseteq\{1,\ldots,s\}$ be a fixed selection of $t$ blocks, and consider the $\mathbb F_q$-linear projection map
        \begin{align*}
         \mu : C &\to \mathbb{F}_{q}^{n_{i_1}} \times \cdots \times \mathbb{F}_{q}^{n_{i_t}}\\  
          \mathbf c=(\mathbf c_1,\ldots,\mathbf c_s)&\mapsto(\mathbf c_{i_1},\ldots,\mathbf c_{i_t}).
        \end{align*}
If $\mu$ is not surjective, then $d_\pi(C^\perp)\le t$.
\end{Lemma}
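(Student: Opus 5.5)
The plan is to use linear algebra duality. Write $W=\mathbb{F}_q^{n_{i_1}}\times\cdots\times\mathbb{F}_q^{n_{i_t}}$ and $N=n_{i_1}+\cdots+n_{i_t}=\dim W$. The image $\mu(C)$ is an $\mathbb F_q$-subspace of $W$. If $\mu$ is not surjective, $\mu(C)$ is a proper subspace, so its orthogonal complement in $W$ with respect to the standard inner product on $\mathbb F_q^{N}$ is nonzero. Choose a nonzero $\mathbf y=(\mathbf y_{i_1},\dots,\mathbf y_{i_t})\in W$ with $\sum_{k=1}^t \mathbf c_{i_k}\cdot \mathbf y_{i_k}=0$ for every $\mathbf c\in C$. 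Concretely, one can take a nonzero linear functional on $W$ that vanishes on $\mu(C)$, and write it in coordinates as a vector $\mathbf y$.

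Next, lift $\mathbf y$ to $\mathbb F_q^n$. Define $\mathbf x\in\mathbb F_q^n$ by $\mathbf x_{i_k}=\mathbf y_{i_k}$ for $k=1,\dots,t$ and $\mathbf x_j=\mathbf 0$ for $j\notin\{i_1,\dots,i_t\}$. For any $\mathbf c\in C$, the Euclidean inner product splits blockwise:
\[
\mathbf c\cdot\mathbf x=\sum_{j=1}^s \mathbf c_j\cdot\mathbf x_j=\sum_{k=1}^t \mathbf c_{i_k}\cdot\mathbf y_{i_k}=0 .
\]
Hence $\mathbf x\in C^\perp$, and $\mathbf x\neq 0$ because $\mathbf y\neq 0$.

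Finally, the nonzero blocks of $\mathbf x$ lie among the $t$ positions $i_1,\dots,i_t$, so $w_\pi(\mathbf x)\le t$. Therefore $d_\pi(C^\perp)\le w_\pi(\mathbf x)\le t$.

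The only point needing care is the existence of a nonzero $\mathbf y$ orthogonal to a proper subspace. This follows from $\dim \mu(C)^{\perp}=N-\dim\mu(C)>0$, since the standard bilinear form on $\mathbb F_q^N$ is nondegenerate. Note that $\mu(C)^\perp$ need not meet $\mu(C)$ trivially, but that does not matter here. I expect no real obstacle. Alternatively, one could phrase the argument with a parity-check matrix $H$ of $C$ and invoke Lemma~\ref{ma14}, but the direct construction above is cleaner.
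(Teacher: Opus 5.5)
Your proof is correct and is essentially the paper's argument. The paper takes the generator submatrix $G'=[G_{i_1}\mid\cdots\mid G_{i_t}]$ and uses rank--nullity to get a nonzero $\mathbf h'$ with $G'(\mathbf h')^{T}=0$, which is exactly your nonzero vector in $\mu(C)^\perp$; it then pads $\mathbf h'$ with zero blocks to obtain an element of $C^\perp$ of $\pi$-weight at most $t$, as you do, and your orthogonal-complement phrasing simply avoids choosing a generator matrix.
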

\begin{proof}
Let $ G $ be a generator matrix of $C$ as
$$
G = \Big[G_1 \mid G_2 \mid \cdots \mid G_s\Big],
$$
where $G_j$ has $n_j$ columns. Therefore, the generator matrix of $\mu(C)$ is 
$$
G':= \Big[G_{i_1} \mid \cdots \mid G_{i_t}\Big].
$$
We know that the rank of $G'$ is the dimension of the vector space spanned by its rows. Thus, if we assume that $\mu$ is not surjective, then 
$$\operatorname{rank}(G')=\dim(\mu(C)) < n_{i_1} + \cdots + n_{i_t}.$$
Thus by rank–nullity theorem, $\dim(\operatorname{null}(G')) \ge 1$. Hence, there exists at least one non-zero vector
$$
\mathbf h'=(\mathbf h_{i_1},\ldots,\mathbf h_{i_t})\in\mathbb F_q^{n_{i_1}}\times\cdots\times\mathbb F_q^{n_{i_t}}
$$
in the null space of $G'$, i.e., $G'(\mathbf{h}')^T=0$, where $T$ denotes the transpose. Let $g_1,\ldots,g_k$ denote the rows of $G'$. Then $g_1\cdot\mathbf h'=\cdots=g_k\cdot\mathbf h'=0,$ where $\cdot$ denotes the Euclidean inner product. Since the rows of $G'$ generate $\mu(C)$, every
$\mathbf c'=(\mathbf c_{i_1},\ldots,\mathbf c_{i_t})\in\mu(C)$ can be written as
$$
\mathbf c'=a_1g_1+\cdots+a_kg_k
$$
for some $a_1,\ldots,a_k\in\mathbb F_q$, and hence $\mathbf c'\cdot\mathbf h'=0.$
Now define $\mathbf h\in\mathbb F_q^n$ by inserting the blocks
$\mathbf h_{i_1},\ldots,\mathbf h_{i_t}$ into the corresponding positions
and zeros elsewhere. Then for any $\mathbf c=(\mathbf c_1,\ldots,\mathbf c_s)\in C$,
$$
\mathbf c\cdot\mathbf h=\sum_{k=1}^t \mathbf c_{i_k}\cdot\mathbf h_{i_k}=\mathbf c'\cdot\mathbf h'=0.
$$
Thus $\mathbf h\in C^\perp$. Since $\mathbf h$ has at most $t$ nonzero blocks, its $\pi$-weight is at most $t$, and therefore $d_\pi(C^\perp)\le t.$
\end{proof}

\begin{Lemma} \label{coset}
Let $A$ and $B$ be $\mathbb F_q$-vector spaces, and let $f: A \to B$ be a surjective $\mathbb F_q$-linear map. Then for every $b \in B$,
$$|f^{-1}(b)| = |\ker f|.$$
\end{Lemma}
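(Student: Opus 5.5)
The plan is to show that every fibre of $f$ is a coset of $\ker f$, and then use that translation by a fixed vector is a bijection. Surjectivity is used only to guarantee that each fibre is nonempty.

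Fix $b \in B$. Since $f$ is surjective, choose some $a_0 \in A$ with $f(a_0)=b$. I claim that
$$f^{-1}(b) = a_0 + \ker f = \{a_0 + k : k \in \ker f\}.$$
For one inclusion, if $k\in\ker f$, then linearity gives $f(a_0+k)=f(a_0)+f(k)=b$. For the other, if $f(a)=b$, then $f(a-a_0)=b-b=0$. Hence $a-a_0\in\ker f$, so $a\in a_0+\ker f$.

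Next consider the translation map $\tau:\ker f\to f^{-1}(b)$, $k\mapsto a_0+k$. By the claim above it is well defined and surjective. It is injective because $a_0+k=a_0+k'$ forces $k=k'$. So $\tau$ is a bijection, and $|f^{-1}(b)|=|\ker f|$.

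There is no real obstacle here. The only point to stress is that surjectivity is what provides $a_0$; without it, $f^{-1}(b)$ could be empty for $b\notin f(A)$. Finiteness is also not needed, since the argument gives a bijection, and in the intended application $A$ and $B$ are finite anyway.
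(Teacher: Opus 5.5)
Your proposal is correct and matches the paper's proof: you pick a preimage $a_0$ using surjectivity, identify $f^{-1}(b)$ with the coset $a_0+\ker f$, and use the translation $k\mapsto a_0+k$ as the bijection. You also check both inclusions of $f^{-1}(b)=a_0+\ker f$, which the paper only asserts.
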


\begin{proof}
Since $f$ is an $\mathbb F_q$-linear map, $\ker  f$ is a subspace of $A$. Since $f$ is surjective, for every $b \in B$, there exists $a_0 \in A$ such that $f(a_0) = b$. We know that 
$$f^{-1}(b) = a_0 + \ker f$$
is a coset of the subspace $\text{ker} f$.
Define
\begin{align*}
 \phi: \ker f &\to a_0 +\ker f  \\
k &\mapsto a_0 + k.
\end{align*}
This map is a bijection. Hence $|\ker f|=|a_0 +\ker f|=|f^{-1}(b)|.$
\end{proof}

The following theorem establishes the converse of Theorem~\ref{ma16} for $\mathbb{F}_q$-linear arrays, thereby extending the classical characterization of orthogonal arrays (\cite[Theorem 3.29]{HSS2012}) to the mixed setting.

\begin{Theorem}\label{ma20}
Let $C$ be an $\mathbb F_q$-linear $M \times s$ array whose rows lie in $\mathbb{F}_{q^{n_1}}\times\mathbb{F}_{q^{n_2}}\times\cdots\times\mathbb{F}_{q^{n_s}}$. If every set of $t$ columns of $C$ is $\mathbb F_q$-linearly independent, then $C$ is an $\mathbb F_q$-linear mixed orthogonal array $\mathrm{MOA}\left(M, s, (q^{n_1}, \dots, q^{n_s}), t\right)$.
\end{Theorem}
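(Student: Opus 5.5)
The plan is to transfer the problem to the error-block side via the isomorphism $\rho$ of \eqref{ma12}, show that projection onto any $t$ blocks is surjective, and then count fibres with Lemma~\ref{coset}.

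First I fix the meaning of the hypothesis. Write each column of $C$ in coordinates: for column $j$, apply $\rho_j$ entrywise. This replaces the column by $n_j$ columns of $\mathbb F_q$-entries, which together form the $j$-th block of the $M\times n$ matrix whose rows are $\rho(\mathbf c)$, $\mathbf c\in C$. I read ``the columns $u_{j_1},\dots,u_{j_t}$ are $\mathbb F_q$-linearly independent'' as saying that the union of the $\mathbb F_q$-columns in blocks $j_1,\dots,j_t$ is linearly independent. Equivalently, there is no nonzero $\mathbf h'=(\mathbf h_{j_1},\dots,\mathbf h_{j_t})\in\mathbb F_q^{n_{j_1}}\times\cdots\times\mathbb F_q^{n_{j_t}}$ with
\begin{equation*}
\sum_{k=1}^t \rho_{j_k}(c_{j_k})\cdot \mathbf h_{j_k}=0 \quad\text{for all } \mathbf c\in C .
\end{equation*}
This is consistent with the proof of Theorem~\ref{ma16}, where the relation is tested against specific rows. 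This interpretation step is the one I expect to need the most care, so I will state it explicitly in the proof.

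Next, by Theorem~\ref{ma15}, $D:=\rho(C)$ is an $\mathbb F_q$-linear error-block code with $|D|=|C|=M$. Fix $t$ blocks $\{j_1,\dots,j_t\}$ and let $\mu:D\to\mathbb F_q^{n_{j_1}}\times\cdots\times\mathbb F_q^{n_{j_t}}$ be the projection, as in Lemma~\ref{surjective}. I claim $\mu$ is surjective, and I argue exactly as in the proof of that lemma. If $\mu(D)$ were a proper subspace, the matrix $G'=[G_{j_1}\mid\cdots\mid G_{j_t}]$ would have rank less than $n_{j_1}+\cdots+n_{j_t}$. Then some nonzero $\mathbf h'$ would satisfy $G'\mathbf h'^T=0$, hence $\mathbf c'\cdot\mathbf h'=0$ for every $\mathbf c'\in\mu(D)$. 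That is precisely a nontrivial $\mathbb F_q$-linear relation among the columns in the chosen blocks, contradicting the hypothesis.

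Finally, since $\mu$ is a surjective $\mathbb F_q$-linear map, Lemma~\ref{coset} shows that every tuple in the target has exactly $|\ker\mu|=M/q^{n_{j_1}+\cdots+n_{j_t}}$ preimages in $D$. Pulling back through the bijection $\rho$, which acts blockwise via the isomorphisms $\rho_{j}$, every tuple of $\mathbb F_{q^{n_{j_1}}}\times\cdots\times\mathbb F_{q^{n_{j_t}}}$ occurs the same number of times in the corresponding $M\times t$ subarray of $C$. Since the $t$ columns were arbitrary, $C$ has strength $t$, and linearity is inherited, so $C$ is an $\mathbb F_q$-linear $\mathrm{MOA}(M,s,(q^{n_1},\dots,q^{n_s}),t)$.
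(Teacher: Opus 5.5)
Your proof is correct and follows the paper's argument: transfer to $\rho(C)$, show that the projection onto any $t$ blocks is surjective, then count fibres with Lemma~\ref{coset}. You also read column independence the same way the paper does, as independence of the union of the $\mathbb F_q$-coordinate columns. The only difference is that you get surjectivity directly from the null-vector argument inside Lemma~\ref{surjective}, whereas the paper first derives $d_\pi((\rho(C))^\perp)\ge t+1$ via Lemma~\ref{ma14} (using the array itself as a parity-check matrix) and then applies Lemma~\ref{surjective}; your shortcut is slightly cleaner and avoids that loose use of Lemma~\ref{ma14}.
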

\begin{proof}
Since $\rho$ is an $\mathbb F_q$-isomorphism, $t$ columns $V_1, V_2, \ldots, V_t$ of $C$ are $\mathbb F_q$-linearly independent if and only if the union of the columns of the blocks $\rho(V_1), \rho(V_2), \ldots, \rho(V_t)$ is an $\mathbb F_q$-linearly independent set. Therefore, by Lemma~\ref{ma14}, we obtain $d_\pi((\rho(C))^\perp)\geq t+1$.\\
For a fixed selection of $t$ columns $\{i_1,\ldots,i_t\}$ of $C$, we define the projection maps $\mu'$ and $\mu$ as shown in the following diagram:
  \begin{align}
\begin{array}{ccc}
C & \xrightarrow{\quad \mu'\quad} & \mathbb{F}_{q^{n_{i_1}}}\times\mathbb{F}_{q^{n_{i_2}}}\times\cdots\times\mathbb{F}_{q^{n_{i_t}}} \\
\Big\downarrow{\scriptstyle \rho} &  & \Big\downarrow{\scriptstyle \rho} \\
\rho(C) & \xrightarrow{\quad \mu \quad} & \mathbb{F}_{q}^{n_{i_1}}\times\mathbb{F}_{q}^{n_{i_2}}\times\cdots\times\mathbb{F}_{q}^{n_{i_t}}.
\end{array}
\end{align} 
Since $d_\pi((\rho(C))^\perp)\ge t+1$, Lemma~\ref{surjective} implies that $\mu$ is surjective. Since $\rho$ is an isomorphism, it follows that $\mu'$ is also surjective, and hence $ |\mu'(C)|=q^{n_{i_1} + \cdots + n_{i_t}}.$ By the first isomorphism theorem,
$$ q^{n_{i_1} + \cdots + n_{i_t}}=|\mu'(C)|= \frac{|C|}{|\ker(\mu')|},$$
and hence
\begin{align}\label{First11}
    |\ker(\mu')| = \frac{M}{q^{n_{i_1} + \cdots + n_{i_t}}}
\end{align}
Using Lemma \ref{coset} and equation \eqref{First11}, for any  $\boldsymbol{c}' \in  \mathbb{F}_{q^{n_{i_1}}} \times \mathbb{F}_{q^{n_{i_2}}}\times \cdots \times \mathbb{F}_{q^{n_{i_t}}}$,  we have 
$$
|(\mu')^{-1}(\boldsymbol{c}')|=\frac{M}{q^{n_{i_1} + \cdots + n_{i_t}}}.
$$
Hence there are $\frac{M}{q^{n_{i_1} + \cdots + n_{i_t}}}$ codewords in $C$ that map to the  $t$-tuple $\mathbf{c}'$ under the map $\mu'$. \\
Therefore, for every selection $\{i_1,\ldots,i_t\}$ of $t$ columns of $C$, each $t$-tuple appears exactly
$$\frac{M}{q^{n_{i_1} + n_{i_2} + \cdots + n_{i_t}}}.$$  Therefore, $C$ is an $ \mathrm{MOA}\left(M, s, (q^{n_1},\dots, q^{n_s}), t\right)$.
\end{proof}

The next two theorems present the main results of this section, which establish a fundamental correspondence between $\mathbb{F}_q$-linear MOAs and $\mathbb{F}_q$-linear error-block codes.

\begin{Theorem}\label{necessary-condition}
 Let $C$ be an $\mathbb F_q$-linear mixed orthogonal array $\mathrm{MOA}\left(M, s, (q^{n_1}, \dots, q^{n_s}), t\right)$. Then the following statements hold.
  \begin{enumerate}
     \item $C$ is $\mathbb F_q$-isomorphic to the linear error-block code $\rho(C)$ of length $n=
     \sum_{i=1}^s n_i$ and of type $\pi=[n_1][n_2]\ldots[n_s]$.
     \item $d_\pi((\rho(C))^\perp)\geq t+1$.
     \item If the $\mathrm{MOA}$ has strength $t$ but not $t+1$, then $d_\pi((\rho(C))^\perp)= t+1$.
 \end{enumerate}
\end{Theorem}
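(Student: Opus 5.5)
Item 1 follows directly from Theorem~\ref{ma15}(1) and Lemma~\ref{ma35}(1). Since $C$ is an $\mathbb F_q$-subspace of $\mathbb{F}_{q^{n_1}}\times\cdots\times\mathbb{F}_{q^{n_s}}$ and $\rho$ is an $\mathbb F_q$-linear isomorphism onto $\mathbb F_q^{n}=\mathbb F_q^{n_1}\times\cdots\times\mathbb F_q^{n_s}$, the restriction $\rho|_C$ is an $\mathbb F_q$-isomorphism onto the subspace $\rho(C)$. By construction, $\rho(C)$ is an error-block code of length $n=\sum n_i$ and type $\pi=[n_1]\cdots[n_s]$.

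For item 2, I would argue by contradiction through the projection in Lemma~\ref{surjective}, used in its contrapositive spirit. Suppose $d_\pi((\rho(C))^\perp)\le t$. Then there is a nonzero $\mathbf h\in(\rho(C))^\perp$ whose nonzero blocks lie within some set of $t$ block indices $\{i_1,\dots,i_t\}$; I enlarge the support to exactly $t$ blocks if necessary, which is possible since $s\ge t$. Every $\mathbf c\in\rho(C)$ then satisfies $\sum_k \mathbf c_{i_k}\cdot\mathbf h_{i_k}=0$, so the projection $\mu(\rho(C))$ lies in the proper hyperplane $\{\mathbf x:\mathbf x\cdot\mathbf h'=0\}$, where $\mathbf h'=(\mathbf h_{i_1},\dots,\mathbf h_{i_t})\neq 0$. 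Hence $\mu$ is not surjective. On the other hand, the strength-$t$ property says that every tuple of $\mathbb F_{q^{n_{i_1}}}\times\cdots\times\mathbb F_{q^{n_{i_t}}}$ occurs, and it occurs $\lambda_{i_1,\dots,i_t}\ge1$ times. So $\mu'$ in the commutative diagram from the proof of Theorem~\ref{ma20} is surjective, and because $\rho$ is a bijection on both sides, $\mu$ is surjective as well. This contradiction gives $d_\pi\ge t+1$. Alternatively, one can feed Theorem~\ref{ma16} into Lemma~\ref{ma14}(1), but that route needs care in translating ``columns of $C$'' into ``blocks of a parity-check matrix of $(\rho(C))^\perp$''. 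I will therefore use the direct projection argument.

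For item 3, the main point is the converse direction: a large dual distance forces higher strength. Assume $d_\pi((\rho(C))^\perp)\ge t+2$. I first show that every projection onto $t+1$ blocks is surjective. If some such projection $\mu$ were not surjective, Lemma~\ref{surjective} with $t+1$ in place of $t$ would give $d_\pi((\rho(C))^\perp)\le t+1$, a contradiction. Next I repeat the counting in the proof of Theorem~\ref{ma20}. Surjectivity of $\mu$ transfers to $\mu'$ via $\rho$, and Lemma~\ref{coset} together with the first isomorphism theorem then shows that each $(t+1)$-tuple occurs exactly $M/q^{n_{i_1}+\cdots+n_{i_{t+1}}}$ times. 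Thus $C$ has strength $t+1$, contradicting the hypothesis. Combined with item 2, this yields equality $d_\pi((\rho(C))^\perp)=t+1$.

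I expect the main obstacle to be bookkeeping rather than depth: checking that Lemma~\ref{surjective} applies verbatim with $t+1$ blocks (which requires $t+1\le s$), and handling the degenerate case $t=s$. In that case $(\rho(C))^\perp=\{0\}$ and $d_\pi$ must be interpreted by convention, for instance as $s+1$. I would add a short remark that item 3 implicitly requires $t<s$, or that the convention makes the statement consistent.
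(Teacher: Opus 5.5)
Your proof is correct, but for items 2 and 3 it takes a different route from the paper.

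The paper argues through column independence. For item 2 it takes Theorem~\ref{ma16} (any $t$ columns of $C$ are $\mathbb F_q$-independent) and transfers this through $\rho$ to independence of the union of the corresponding block columns of $\rho(C)$. It then applies Lemma~\ref{ma14}(1) to $(\rho(C))^\perp$, whose parity-check matrix is a generator matrix of $\rho(C)$. For item 3 it uses the contrapositive of Theorem~\ref{ma20} to obtain $t+1$ dependent columns, and then Lemma~\ref{ma14}(2).

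You bypass Lemma~\ref{ma14} and the notion of column independence entirely and work only with surjectivity of block projections. For item 2, a nonzero dual word supported on at most $t$ blocks confines $\mu(\rho(C))$ to a proper hyperplane, which contradicts the occurrence of every $t$-tuple. For item 3, you run Lemma~\ref{surjective} with $t+1$ blocks together with the coset count of Lemma~\ref{coset}; this is just the core of the proof of Theorem~\ref{ma20} without its detour through Lemma~\ref{ma14}.

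The paper's packaging buys the parallel with the classical ``independence $\Leftrightarrow$ strength'' theory of orthogonal arrays. Yours is shorter and more self-contained, and your caution about the translation step is well placed:
\begin{itemize}
\item Theorem~\ref{ma16} only excludes relations $\sum_i\alpha_i u_{j_i}=0$ with one scalar $\alpha_i\in\mathbb F_q$ per column; its proof tests only tuples with a single entry $1$.
\item Lemma~\ref{ma14} needs all $n_{i_1}+\cdots+n_{i_t}$ coordinate columns of $\rho(C)$ on those blocks to be independent, which is precisely surjectivity of $\mu$.
\item When some $n_i>1$, the first does not imply the second. The single column of the $\mathbb F_2$-linear array $\{0,1\}\subseteq\mathbb F_4$ is nonzero, yet its two $\rho$-coordinate columns are dependent.
\end{itemize}
So the paper's step ``since $\rho$ is an isomorphism'' implicitly needs the full strength-$t$ property, which your projection argument uses directly.

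Your remark on the degenerate case $t=s$, where $(\rho(C))^\perp=\{0\}$ and a convention such as $d_\pi=s+1$ is required, addresses a point the paper leaves implicit.
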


\begin{proof}
The first statement is an immediate consequence of Theorem \ref{ma15}. To prove the second one, we note that by Theorem \ref{ma16}, any $t$ columns $V_1, V_2, \ldots, V_t$ of $C$ are $\mathbb F_q$-linearly independent. 
Since $\rho$ is an $\mathbb F_q$-isomorphism, the union of columns of the corresponding $t$ blocks $\rho(V_1), \rho(V_2), \ldots, \rho(V_t)$ in $\rho(C)$ is an $\mathbb F_q$-linearly independent set. Hence, by Lemma~\ref{ma14},
$$d_{\pi}\Big(\big(\rho(C)\big)^{\perp}\Big)\ge t+1.$$
To prove the last statement,  assume that every set of $t+1$ columns of $C$ is linearly independent. Then by Theorem \ref{ma20}, $C$ is a mixed orthogonal array $\mathrm{MOA}(M, s, (q^{n_1}, \dots, q^{n_s}), t+1)$, which contradicts the hypothesis because $C$ has strength $t$ but not $t+1$. Therefore, there exists a set of $t+1$ columns of $C$ that is linearly dependent. Since $\rho$ is an $\mathbb F_q$-isomorphism, there exists a set of $t+1$ blocks of $\rho(C)$ whose union of columns is $\mathbb F_q$-linearly dependent.  Applying Lemma \ref{ma14},  $d_\pi((\rho(C))^\perp)\leq t+1,$ which implies $d_\pi((\rho(C))^\perp)= t+1$ together with the second statement.
\end{proof}

\begin{Theorem}\label{th:so}
Let $C \subseteq\mathbb F_q^n$ be an $\mathbb{F}_q$-linear error-block code of length $n=\sum_{i=1}^s n_i$, type $\pi=[n_1][n_2]\ldots[n_s]$, and dimension $k$. Then  $\rho^{-1}(C)$ is an $\mathbb F_q$-linear mixed orthogonal array 
     $$\mathrm{MOA}\left(q^k, s, (q^{n_1}, \dots, q^{n_s}), d_\pi(C^{\perp})-1\right).$$
\end{Theorem}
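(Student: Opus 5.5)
Set $t:=d_\pi(C^\perp)-1$ and $D:=\rho^{-1}(C)$. Since $\rho$ is an $\mathbb F_q$-linear isomorphism (Lemma~\ref{ma35}), $D$ is an $\mathbb F_q$-linear subspace of $\mathbb{F}_{q^{n_1}}\times\cdots\times\mathbb{F}_{q^{n_s}}$ with $|D|=|C|=q^k$. Viewing its elements as rows gives an $\mathbb F_q$-linear $q^k\times s$ array. It therefore remains to show that $D$ has strength $t$. I would prove this directly by a counting argument with projections, following the proof of Theorem~\ref{ma20}, rather than going through column independence.

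Fix $t$ column indices $\{i_1,\dots,i_t\}$. Consider the projection $\mu:C\to\mathbb F_q^{n_{i_1}}\times\cdots\times\mathbb F_q^{n_{i_t}}$ and the corresponding projection $\mu':D\to\mathbb F_{q^{n_{i_1}}}\times\cdots\times\mathbb F_{q^{n_{i_t}}}$. They are related by the commutative square in the proof of Theorem~\ref{ma20}, because $\rho$ acts blockwise. Suppose $\mu$ were not surjective. Then Lemma~\ref{surjective} gives $d_\pi(C^\perp)\le t=d_\pi(C^\perp)-1$, which is a contradiction. Hence $\mu$ is surjective, and since $\rho$ is a blockwise isomorphism, so is $\mu'$.

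By Lemma~\ref{coset}, every fibre of $\mu'$ has size $|\ker\mu'|=q^k/q^{n_{i_1}+\cdots+n_{i_t}}$. So every $t$-tuple appears the same number of times in the chosen columns. Since the column subset was arbitrary, $D$ is an $\mathrm{MOA}(q^k,s,(q^{n_1},\dots,q^{n_s}),t)$.

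There are two minor points to check. The first is the degenerate case $C^\perp=\{0\}$, i.e.\ $C=\mathbb F_q^n$, where $d_\pi(C^\perp)$ must be taken by convention as $s+1$. Strength $s$ then holds trivially, since $D$ is the whole space. The second is $t=0$, which is vacuous. The main obstacle is only to ensure that Lemma~\ref{surjective} applies to the Euclidean dual of $C$ in $\mathbb F_q^n$ without any self-dual basis assumption. This works because that lemma is stated purely for the error-block code, and no trace duality is needed.
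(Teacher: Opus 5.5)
Your proof is correct. It is a more direct route than the paper's, though it uses the same key lemmas.

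The paper argues by contradiction that every $t$ columns of $\rho^{-1}(C)$ are $\mathbb{F}_q$-linearly independent. A dependence among $t$ columns would give $d_\pi(C^\perp)\le t$ via Lemma~\ref{ma14}, implicitly using a generator matrix of $C$ as a parity-check matrix of $C^\perp$. The paper then invokes Theorem~\ref{ma20}. However, the proof of Theorem~\ref{ma20} immediately converts column independence back into $d_\pi(C^\perp)\ge t+1$, again through Lemma~\ref{ma14}. It then runs exactly your projection argument with Lemmas~\ref{surjective} and~\ref{coset}. You have cut out this round trip through column independence.

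Your route gains two things. You never need the notion of $\mathbb{F}_q$-linear independence for columns whose entries lie in different extension fields $\mathbb{F}_{q^{n_i}}$, which only makes clean sense after applying $\rho$. You also never need to transfer Lemma~\ref{ma14} from $C$ to $C^\perp$. In addition, you explicitly handle the degenerate cases $C=\mathbb{F}_q^n$ (with the convention $d_\pi(\{0\})=s+1$) and $t=0$, which the paper passes over.

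The paper's route gains modularity. Theorem~\ref{th:so} becomes a corollary of the general ``independence implies strength'' result, Theorem~\ref{ma20}, mirroring the classical theory of linear orthogonal arrays.
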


\begin{proof}
By Theorem \ref{ma15}, the map $\rho$ is an $\mathbb{F}_q$-isomorphism, and hence $\rho^{-1}(C)$ is an $\mathbb{F}_q$-linear array in $\mathbb{F}_{q^{n_1}}\times\cdots\times\mathbb{F}_{q^{n_s}}$ with $M:=|C|=q^k$ rows. Let $t := d_\pi(C^\perp)-1.$ Suppose that there exists a set of $t$ columns of $\rho^{-1}(C)$ that is $\mathbb F_q$-linearly dependent. Applying $\rho$, we see that the union of the corresponding $t$ blocks of $C$ is $\mathbb{F}_q$-linearly dependent. By Lemma \ref{ma14}, this implies $d_\pi(C^\perp)\le t$, which contradicts the definition of $t$. Hence, any $t$ columns of $\rho^{-1}(C)$ are $\mathbb{F}_q$-linearly independent. By Theorem \ref{ma20}, it follows that $\rho^{-1}(C)$ is an $\mathbb{F}_q$-linear mixed orthogonal array of strength $t$. That is, $\rho^{-1}(C)$ is an $\mathrm{MOA}\left(M, s, (q^{n_1},\dots,q^{n_s}), d_\pi(C^\perp)-1\right)$.
\end{proof}

%=====================================================================
%Section: Irredundant MOAs
%=====================================================================

\section{Irredundant mixed orthogonal arrays}

In this section we study irredundant mixed orthogonal arrays (IrMOAs).  The section contains two main results. The first concerns the extremal case $t=\lfloor s/2\rfloor$ and shows that, under a  condition on the block sizes, an IrMOA is MDS if and only if its minimum index is equal to $1$ (Theorem\ref{cor:extreme_cases}). The second main result gives a method for constructing $\mathbb{F}_q$-linear IrMOAs from $\mathbb{F}_q$-linear error-block codes by characterizing when the preimages of a code and its dual under $\rho$ are irredundant mixed orthogonal arrays (Theorem \ref{latest}).

\begin{Definition}\cite{GBZ2016}
A mixed orthogonal array $\mathrm{MOA}\left(M, s, (q^{n_1},\dots, q^{n_s}), t\right)$ is called irredundant if the rows of any $M\times (s-t)$ subarray are pairwise distinct. Such an array is denoted by $\mathrm{IrMOA}(M, s, (q^{n_1},\dots, q^{n_s}), t)$.
\end{Definition}

The following lemma characterizes irredundancy in terms of the minimum Hamming distance.

\begin{Lemma}\cite[Lemma 3]{SSCZ2022}\label{m88}
A mixed orthogonal array $\mathrm{MOA}(M,s,(q^{n_1},\dots,q^{n_s}),t)$ with minimum Hamming distance $d_H$ is irredundant if and only if $d_H\ge t+1$. 
\end{Lemma}

%\begin{proof}
%If $d_H\ge t+1$ then no two distinct rows of $\mathrm{MOA}$ can coincide on $s-t$ or more coordinates, so every $M \times (s-t)$ subarray has pairwise distinct rows, i.e., it is irredundant. Conversely, if $d_H \le t$, then two rows agree on at least $s-t$ coordinates, contradicting irredundancy.    
%\end{proof}

\begin{Example}  
The array $C$ in Example~\ref{e1} is an $\mathrm{IrMOA}$ since $d_H(C)=3=t+1$.
\end{Example}

The following theorem, which is an immediate consequence of Theorem~\ref{Singleton} and Lemma~\ref{m88}, gives a necessary condition for the existence of IrMOAs.

\begin{Theorem}\label{m66}
Let $C$ be an $\mathrm{IrMOA}(M, s, (q^{n_1},\dots, q^{n_s}), t)$. Then
$$\prod_{i=1}^{t} q^{n_i}\le \prod_{i=t+1}^{s} q^{n_i}.$$
\end{Theorem}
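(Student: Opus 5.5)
The plan is to chain the two inequalities of Theorem~\ref{Singleton} and then use irredundancy, via Lemma~\ref{m88}, to shift the index range of the upper bound.

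First, since $C$ is in particular an $\mathrm{MOA}(M,s,(q^{n_1},\dots,q^{n_s}),t)$, Theorem~\ref{Singleton} gives
$$\prod_{i=1}^{t} q^{n_i}\le M\le \prod_{i=d_H}^{s} q^{n_i},$$
where $d_H$ is the minimum Hamming distance of $C$.

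Second, because $C$ is irredundant, Lemma~\ref{m88} yields $d_H\ge t+1$. Every factor $q^{n_i}$ is at least $1$, so dropping the factors with indices $t+1\le i<d_H$ can only decrease the product. Hence
$$\prod_{i=d_H}^{s} q^{n_i}\le \prod_{i=t+1}^{s} q^{n_i}.$$
Combining the two displays gives the claimed inequality.

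There is one degenerate case to dispose of. If $d_H>s$, which happens for example when $M=1$ and the minimum distance is taken to be infinite or $s+1$, the upper product is empty and equals $1$. In that case $M\le 1$, so $M=1$. The left-hand side then forces $t=0$, and the statement is trivial.

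There is also a shortcut avoiding Theorem~\ref{Singleton}'s upper bound altogether. Irredundancy says the projection onto the last $s-t$ columns is injective on the rows, so directly $M\le\prod_{i=t+1}^{s}q^{n_i}$. I expect no real obstacle; the only care needed is the direction of the monotonicity step (more factors, each at least $1$) and the degenerate case above.
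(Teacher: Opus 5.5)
Your proposal is correct and follows the paper's proof: chain the two bounds of Theorem~\ref{Singleton}, use $d_H\ge t+1$ from Lemma~\ref{m88}, and enlarge $\prod_{i=d_H}^{s}q^{n_i}$ to $\prod_{i=t+1}^{s}q^{n_i}$. Your reason for that last step (every added factor is at least $1$) is actually the right one; the ordering $n_1\ge\cdots\ge n_s$ that the paper cites is not needed there. Your handling of the case $M=1$ and your direct irredundancy shortcut are also valid.
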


\begin{proof}
Applying Theorem \ref{Singleton}, Lemma \ref{m88}, and the assumption $n_1\geq n_2\geq \ldots \geq n_s$, we obtain
$$q^{n_1}q^{n_2}\ldots, q^{n_t}\leq q^{n_{d_H}}q^{n_{d_H+1}}\ldots q^{n_s}\leq q^{n_{t+1}}q^{n_{t+2}}\ldots q^{n_s}.$$
\end{proof}

The next corollary says that the strength $t$ of any IrMOA is bounded by the number of columns.
    
\begin{Corollary} \label{m87}
Let $C$ be an $\mathrm{IrMOA}(M, s, (q^{n_1},\dots, q^{n_s}), t)$. Then  $2t\leq s$.
\end{Corollary}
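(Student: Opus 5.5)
The plan is to derive $2t\le s$ by combining Theorem~\ref{m66} with the monotonicity $n_1\ge n_2\ge\cdots\ge n_s$, arguing by contradiction.

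Suppose $2t>s$, i.e.\ $t\ge s-t+1$. Theorem~\ref{m66} gives
$$\sum_{i=1}^{t} n_i\le \sum_{i=t+1}^{s} n_i .$$
The right-hand side has $s-t$ terms, and the left-hand side has $t\ge s-t+1$ terms. Since the sequence is non-increasing, each $n_{t+j}$ with $1\le j\le s-t$ satisfies $n_{t+j}\le n_j$. Hence
$$\sum_{i=t+1}^{s} n_i\le \sum_{j=1}^{s-t} n_j .$$
Substituting this into the inequality from Theorem~\ref{m66} yields
$$\sum_{i=1}^{t} n_i\le \sum_{j=1}^{s-t} n_j,$$
that is,
$$\sum_{i=s-t+1}^{t} n_i\le 0 .$$
Because $t\ge s-t+1$, this sum is nonempty, and each $n_i\ge 1$, so the sum is positive. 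This is a contradiction, and therefore $2t\le s$.

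An alternative route goes directly through Lemma~\ref{m88} and Theorem~\ref{Singleton}. Irredundancy gives $d_H\ge t+1$, so $M\le\prod_{i=t+1}^s q^{n_i}$, while $M\ge\prod_{i=1}^t q^{n_i}$. This is exactly the chain already packaged in Theorem~\ref{m66}, so I will simply cite that theorem.

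The only delicate point is the index bookkeeping: pairing $n_{t+j}$ with $n_j$ so that there are at least as many left terms as right terms, and making sure the leftover block $\{s-t+1,\dots,t\}$ is nonempty precisely when $2t>s$. A degenerate case such as $t=s$ is handled automatically, since then the right sum is empty and equals $0$, while the left sum is positive.
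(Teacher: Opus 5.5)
Your proof is correct and takes essentially the paper's approach: both apply Theorem~\ref{m66} and then use $n_1\ge n_2\ge\cdots\ge n_s$ to reduce to $t\le s-t$. The only difference is bookkeeping. The paper argues directly via the pivot $q^{n_{t+1}}$, namely $(q^{n_{t+1}})^t\le\prod_{i=1}^{t}q^{n_i}\le\prod_{i=t+1}^{s}q^{n_i}\le(q^{n_{t+1}})^{s-t}$, while you pair $n_{t+j}\le n_j$ and argue by contradiction, which has the small side benefit of not needing $n_{t+1}$ to exist when $t=s$.
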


\begin{proof}
By the assumption $n_1\geq n_2\geq \ldots \geq n_s$,  we have $q^{n_i}\ge q^{n_{t+1}}$ for every $i\le t$,  and $q^{n_j}\le q^{n_{t+1}}$ for every $j\ge t+1$.  Applying Corollary \ref{m66}, we obtain
$$
(q^{n_{t+1}})^t\leq q^{n_1}q^{n_2}\cdots q^{n_t} \leq q^{n_{t+1}}q^{n_{t+2}}\cdots q^{n_s} \;\le\; (q^{n_{t+1}})^{s-t},
$$
which gives $t\le s-t$, i.e., $2t\leq s$. 
\end{proof}

%=====================================================================

\subsection{MDS irredundant mixed orthogonal arrays}

By Corollary~\ref{m87}, any irredundant mixed orthogonal array must satisfy $2t\le s$.  Since $t$ is an integer, this implies $t \le \left\lfloor \frac{s}{2}\right\rfloor .$ When $t=\lfloor s/2\rfloor$, irredundant orthogonal arrays $\mathrm{IrOA}(M,s,q^{n},t)$ with $M=q^t$ (equivalently, with index $\lambda=1$) are known to be equivalent to classical MDS codes \cite{B2019,GALRZ2015}.  The following theorem and corollary generalize this result to irredundant mixed orthogonal arrays.

\begin{Theorem}\label{cor:extreme_cases}
Let $C$ be an $\mathrm{IrMOA}(M,s,(q^{n_1},\dots,q^{n_s}),t)$ satisfying $\sum_{j=t+1}^{2t+1} n_j = \sum_{i=1}^{t} n_i$, and assume that $t=\lfloor s/2\rfloor$. Then $C$ is $\mathrm{MDS}$ if and only if $\lambda_{\min}=1.$
\end{Theorem}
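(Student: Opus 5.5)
The plan is to reduce the statement to Corollary~\ref{ma60}(2) by showing that the hypotheses force $d_H=t+1$. Once we know $d_H=t+1$, the hypothesis $\sum_{j=t+1}^{2t+1} n_j=\sum_{i=1}^t n_i$ becomes exactly $\sum_{j=d_H}^{s} n_j=\sum_{i=1}^t n_i$, at least when $s=2t+1$. Then Corollary~\ref{ma60}(2) gives that $C$ is MDS if and only if $\lambda_{\min}=1$. This last step can also be checked directly: $M=\prod_{i=d_H}^s q^{n_i}$ holds if and only if $M=\prod_{i=1}^t q^{n_i}$, i.e.\ if and only if $\lambda_{\min}=M/\prod_{i=1}^t q^{n_i}=1$.

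The first step is to get $d_H\ge t+1$. This follows at once from irredundancy via Lemma~\ref{m88}.

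The second step is the reverse inequality, obtained from the Singleton-type bound (Theorem~\ref{Singleton}). That theorem gives $\prod_{i=1}^t q^{n_i}\le M\le \prod_{i=d_H}^s q^{n_i}$, hence $\sum_{i=1}^t n_i\le \sum_{j=d_H}^s n_j$. Suppose $d_H\ge t+2$. Then
$$\sum_{j=d_H}^s n_j\le \sum_{j=t+2}^{s} n_j=\sum_{j=t+1}^{s} n_j-n_{t+1}<\sum_{j=t+1}^{s} n_j ,$$
using $n_{t+1}\ge 1$. When $s=2t+1$, the right-hand side is $\sum_{j=t+1}^{2t+1}n_j=\sum_{i=1}^t n_i$ by hypothesis, which contradicts the previous inequality. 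Hence $d_H=t+1$.

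The main obstacle I expect is the bookkeeping in the case $s$ even, $s=2t$. There the index $2t+1$ exceeds $s$, so the hypothesis must be read with $n_{2t+1}$ absent (equivalently $n_{2t+1}=0$), giving $\sum_{j=t+1}^{2t}n_j=\sum_{i=1}^t n_i$. Since $n_1\ge\cdots\ge n_s$ and both sums have $t$ terms, this forces all $n_i$ to be equal. The same chain of inequalities, now with upper limit $s=2t$, again rules out $d_H\ge t+2$. So $d_H=t+1$ and $\sum_{j=d_H}^s n_j=\sum_{i=1}^t n_i$, and Corollary~\ref{ma60}(2) applies verbatim. I would state this interpretation explicitly at the start of the proof so that both parities of $s$ are handled uniformly.
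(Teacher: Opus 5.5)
Your proof is correct, but it is organized differently from the paper's.

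The paper splits by parity and uses the hypothesis only at the end. For $s=2t$ it never mentions $n_{2t+1}$. Instead it squeezes $M$ between $\prod_{i=1}^{t}q^{n_i}$ and the irredundancy bound $M\le\prod_{i=1}^{s-t}q^{n_i}$, which forces $d_H=t+1$, and then cites Corollary~\ref{ma60}(1). For $s=2t+1$ it uses the Singleton-type bound alone to get $d_H\in\{t+1,t+2\}$. It disposes of $d_H=t+2$ via Corollary~\ref{ma60}(1), and for $d_H=t+1$ it uses the hypothesis to get $\lambda_{\min}=1$ and then applies Corollary~\ref{ma60}(2).

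You instead insert the hypothesis into the Singleton-type bound immediately, which excludes $d_H\ge t+2$ in one step. This is shorter and uniform. It also shows that the paper's branch $d_H=t+2$ cannot actually occur under the hypothesis. In fact your squeeze $\prod_{i=1}^{t}q^{n_i}\le M\le\prod_{j=t+1}^{s}q^{n_j}=\prod_{i=1}^{t}q^{n_i}$ shows that both sides of the equivalence always hold.

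What the paper's route buys is that the even case needs no interpretation of the (there undefined) $n_{2t+1}$. Your reading is harmless, though. For an IrMOA with $s=2t$, Theorem~\ref{m66} gives $\sum_{i=1}^{t}n_i\le\sum_{j=t+1}^{2t}n_j$, and $n_i\ge n_{t+i}$ gives the reverse inequality. So the equality you impose holds automatically. Say this explicitly rather than adopting it as an assumption, and your argument covers $s=2t$ unconditionally, as the paper's does.
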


\begin{proof}
Observe that $t=\lfloor s/2\rfloor$ holds if and only if $s\in\{2t,\,2t+1\}$. We consider these two cases separately.

\textbf{Case 1: $s=2t$.}
Since $C$ is irredundant, the number of rows of $ M $ cannot exceed the number of distinct rows in any $M \times (s-t)$-subarray, that is $M\leq \prod_{i=1}^{s-t} q^{n_i}$ by choosing the particular $s-t$ columns $\{1, 2, \ldots, s-t\}$. On the other side, by Lemma \ref{m88}, $d_H\geq t+1$. Thus $s-t\geq s-d_H+1$.  Applying Theorem \ref{Singleton}, we have
\begin{align}\label{eq:m611}
    \prod_{i=1}^{t} q^{n_i} \;\leq\; M \;\leq\; \prod_{i=d_H}^{s} q^{n_i} \;\leq\; \prod_{i=1}^{\,s-d_H+1} q^{n_i}\leq \prod_{i=1}^{\,s-t} q^{n_i}.
\end{align}
Since $s=2t$, the lower and upper bounds in \eqref{eq:m611} coincide. Hence $d_H=s-t+1=t+1.$ Consequently, by Corollary~\ref{ma60}, the array $C$ is MDS if and only if $\lambda_{\min}=1$.

\textbf{Case 2: $s=2t+1$.}
Since $C$ is irredundant, Lemma~\ref{m88} gives $d_H\ge t+1$. 
Suppose that $d_H\ge t+3$. Then
$$s-d_H+1 \le (2t+1)-(t+3)+1 = t-1.$$
Using the Singleton-type bound \eqref{eq:m61}, we obtain
$$
\prod_{i=1}^{t} q^{n_i}\le M\le \prod_{i=1}^{s-d_H+1} q^{n_i}\le \prod_{i=1}^{t-1} q^{n_i},
$$
which is impossible since $n_t\ge 1$. Therefore $d_H\le t+2$, and hence $d_H\in\{t+1,t+2\}.$ If $d_H=t+2$, then $t=s-d_H+1$, and Corollary~\ref{ma60} implies that $C$ is MDS if and only if $\lambda_{\min}=1$.
If $d_H=t+1$, then by definition of $\lambda_{\min}$ we have  $M = \lambda_{\min} \prod_{i=1}^t q^{n_i}$. Substituting this into \eqref{eq:m61} and using the hypothesis $\sum_{j=t+1}^{2t+1} n_j = \sum_{i=1}^t n_i$, we obtain
$$ \lambda_{\min} \prod_{i=1}^t q^{n_i} \le \prod_{j=t+1}^{2t+1} q^{n_j} = \prod_{i=1}^t q^{n_i}. $$
That is $\lambda_{\min} \le 1$, which leads to $\lambda_{\min} = 1$. Now applying the last statement of Corollary \ref{ma60} implies that $C$ is $\mathrm{MDS}$.
\end{proof}

Since $\rho$ is an $\mathbb F_q$-linear isomorphism that preserves distance (Theorem~\ref{ma15}), the $\mathbb F_q$-mixed orthogonal array $C$ is MDS if and only if the $\mathbb F_q$-linear error-block code $\rho(C)$ is MDS. Thus, we have the following corollary.  

\begin{Corollary}
Let $C$ be an $\mathbb F_q$-linear $\mathrm{IrMOA}(M,s,(q^{n_1},\dots,q^{n_s}),t)$ satisfying $\sum_{j=t+1}^{2t+1} n_j = \sum_{i=1}^{t} n_i$ and assume that $t=\lfloor s/2\rfloor$. Then $\rho(C)$ is an MDS $\mathbb F_q$-linear error-block code if and only if $\lambda_{\min}=1$.
\end{Corollary}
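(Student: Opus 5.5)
The plan is to reduce the Corollary to Theorem~\ref{cor:extreme_cases} by transporting the MDS property through the map $\rho$. Since $C$ is an $\mathbb F_q$-linear IrMOA, Theorem~\ref{ma15} shows that $\rho(C)$ is an $\mathbb F_q$-linear error-block code of type $\pi=[n_1]\cdots[n_s]$. It has $\dim_{\mathbb F_q}\rho(C)=\log_q M=:k$, because $\rho$ is an $\mathbb F_q$-isomorphism, and it satisfies $d_\pi(\rho(C))=d_H(C)$. Under the hypotheses $t=\lfloor s/2\rfloor$ and $\sum_{j=t+1}^{2t+1}n_j=\sum_{i=1}^t n_i$, Theorem~\ref{cor:extreme_cases} already says that $C$ is MDS as a mixed orthogonal array if and only if $\lambda_{\min}=1$. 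So the whole task is to prove that $C$ is MDS (that is, $M=\prod_{i=d_H}^s q^{n_i}$) if and only if $\rho(C)$ is MDS (that is, equality holds in Lemma~\ref{makh}).

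The key step is to compare the two notions of MDS. Write $d=d_H(C)=d_\pi(\rho(C))$. Then
\[
M=\prod_{i=d}^{s}q^{n_i}\iff k=\log_q M=n_d+n_{d+1}+\cdots+n_s .
\]
The right-hand equality is exactly equality in the Singleton-type bound $k\le n_{d_\pi}+\cdots+n_s$ of Lemma~\ref{makh}, that is, the definition of an MDS error-block code. Both definitions use the same decreasing ordering $n_1\ge\cdots\ge n_s$: the MOA columns are ordered this way by convention, and the partition $\pi$ is ordered the same way. Hence the indices line up with no reordering. Combining this equivalence with Theorem~\ref{cor:extreme_cases} yields the statement.

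The main point needing care, though not really an obstacle, is the consistency of conventions. First, I must check that $\rho$ carries column $i$ of $C$ to block $i$ of $\rho(C)$, with block size $n_i$. This holds by the componentwise definition \eqref{ma12}. Second, I must check that the minimum distance entering the MOA Singleton-type bound (the minimum Hamming distance between distinct rows) equals the minimum nonzero weight of the linear code $\rho(C)$. This follows from linearity of $C$: differences of rows are again rows. Together with the weight identity $w_H(\mathbf a)=w_\pi(\rho(\mathbf a))$ established in the proof of Theorem~\ref{ma15}, this gives $d_H(C)=d_\pi(\rho(C))$. With these checks done, the argument is a short chain of equivalences.
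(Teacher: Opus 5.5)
Your proposal is correct and follows the paper's route. The paper likewise notes that $\rho$ is a distance-preserving $\mathbb F_q$-isomorphism, so $C$ is MDS exactly when $\rho(C)$ is, and then invokes Theorem~\ref{cor:extreme_cases}. Your explicit matching $M=\prod_{i=d}^{s}q^{n_i}\iff k=n_d+\cdots+n_s$, together with your remark that linearity is what turns weight preservation into $d_H(C)=d_\pi(\rho(C))$, spells out details the paper leaves implicit.
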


\begin{Remark}
Irredundant mixed orthogonal arrays $\mathrm{IrMOA}(M,s,(q^{n_1},\dots,q^{n_s}),t)$ with $t=\lfloor s/2\rfloor$ and $\lambda_{\min}=1$ construct AME states of minimal support in heterogeneous systems (see Introduction). Theorem \ref{cor:extreme_cases} says that such MDS $\mathrm{IrMOA}(M,s,q^{n},t)$ with $t=\lfloor s/2\rfloor$ can cnstruct AME states of minimal support in heterogeneous systems. Or equivalently corollary says such   MDS $\mathbb F_q$-linear error-block code can construct AME states of minimal support in heterogeneous systems.
\end{Remark}

%=====================================================================

\subsection{$\mathbb F_q$-linear irredundant mixed orthogonal arrays}

Let $C \subseteq \mathbb{F}_q^n$ be an $\mathbb{F}_q$-linear error-block code of dimension $k$ and type $\pi=[n_1]\dots[n_s]$, where $n=\sum_{i=1}^s n_i$. It follows from Theorem~\ref{th:so} that the preimage $\rho^{-1}(C)$ is an $\mathbb{F}_q$-linear mixed orthogonal array
\begin{equation}\label{ne1}
\mathrm{MOA}(q^k, s, (q^{n_1}, \dots, q^{n_s}), t_1),
\end{equation}
where $t_1 = d_\pi(C^{\perp})-1$. Similarly, the preimage of the Euclidean dual $\rho^{-1}(C^\perp)$ is an
\begin{equation}\label{ne2}
\mathrm{MOA}(q^{n-k}, s, (q^{n_1}, \dots, q^{n_s}), t_2),
\end{equation}
with strength $t_2 = d_\pi(C)-1$.

The following theorem provides a method for constructing $\mathbb{F}_q$-linear IrMOAs from $\mathbb{F}_q$-linear error-block codes.

\begin{Theorem}\label{latest}
Let $C \subseteq \mathbb{F}_q^n$ be an $\mathbb{F}_q$-linear error-block code of type $\pi$ with parameters $[n, k, d_{\pi}(C)]$ and Euclidean dual $C^\perp$. Then $n \ge 2\min\{t_1, t_2\}$, and the following statements hold:
\begin{enumerate}
    \item \label{1} The array $\rho^{-1}(C)$ is an $\mathbb{F}_q$-linear irredundant MOA if and only if $d_\pi(C) \ge d_\pi(C^{\perp})$.
    \item Suppose either $q$ is even or both $n$ and $q$ are odd. Then:
    \begin{enumerate}
        \item \label{2a} The  array $\rho^{-1}(C^\perp)$ is an $\mathbb{F}_q$-linear irredundant MOA if and only if $d_\pi(C^{\perp}) \ge d_\pi(C)$.
        \item \label{2b} Both $\rho^{-1}(C)$ and $\rho^{-1}(C^\perp)$ are $\mathbb{F}_q$-linear irredundant MOAs if and only if $d_\pi(C) = d_\pi(C^{\perp})$.
    \end{enumerate}
\end{enumerate}
\end{Theorem}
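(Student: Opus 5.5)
The plan is to combine Theorem~\ref{th:so}, Theorem~\ref{ma15} (distance preservation) and Lemma~\ref{m88}.

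\textbf{The bound $n\ge 2\min\{t_1,t_2\}$.} Corollary~\ref{sari} gives $d_\pi(C)+d_\pi(C^\perp)\le n+2$, that is, $t_1+t_2\le n$. Hence $2\min\{t_1,t_2\}\le t_1+t_2\le n$. A sharper route is also available: Lemma~\ref{makh} applied to $C$ and to $C^\perp$ gives $t_1\le k$ and $t_2\le n-k$, since each $n_i\ge1$.

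\textbf{Statement~\ref{1}.} By Theorem~\ref{th:so}, $\rho^{-1}(C)$ is an $\mathbb F_q$-linear MOA of strength $t_1=d_\pi(C^\perp)-1$. By Theorem~\ref{ma15}(2), its minimum Hamming distance is $d_H(\rho^{-1}(C))=d_\pi(C)$; this holds because $\rho$ is a weight-preserving linear bijection and the array is linear, so minimum distance equals minimum nonzero weight. Lemma~\ref{m88} says the array is irredundant if and only if $d_H\ge t_1+1$. This condition reads $d_\pi(C)\ge d_\pi(C^\perp)$.

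\textbf{Statement~\ref{2a}.} Apply the same argument to the code $C^\perp$. By Theorem~\ref{th:so}, $\rho^{-1}(C^\perp)$ is an MOA of strength $d_\pi((C^\perp)^\perp)-1=d_\pi(C)-1=t_2$, using $(C^\perp)^\perp=C$ for the Euclidean dual on $\mathbb F_q^n$. Its Hamming distance is $d_\pi(C^\perp)$ by Theorem~\ref{ma15}(2). Lemma~\ref{m88} then gives irredundancy if and only if $d_\pi(C^\perp)\ge d_\pi(C)$.

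\textbf{Statement~\ref{2b}.} This is the conjunction of Statements~\ref{1} and~\ref{2a}.

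The hypothesis on $q$ and $n$ is not actually needed in this argument. Its natural role is to make $\rho^{-1}(C^\perp)=(\rho^{-1}(C))^{\perp_{\rm Tr}}$ via Theorem~\ref{ma15}(3), so that item~2 can be read as a statement about the trace dual of the MOA. I would remark on this rather than use it. Note also that Theorem~\ref{ma15}(3) requires each $n_i$ to be odd, not $n$.

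\textbf{Main obstacle.} There is no serious obstacle; the main care needed is the bookkeeping. One must check that Lemma~\ref{m88} is applied with the exact strength given by Theorem~\ref{th:so}, which may be strictly larger than some nominal $t$. Irredundancy as defined depends on the declared strength, so I would fix the strength to be exactly $d_\pi(\cdot^\perp)-1$, consistent with Theorem~\ref{necessary-condition}(3).
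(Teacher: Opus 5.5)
Your proposal is correct and follows essentially the same route as the paper. It uses Corollary~\ref{sari} for $t_1+t_2\le n$, then Theorem~\ref{th:so}, the distance preservation of Theorem~\ref{ma15}, and Lemma~\ref{m88} applied to each of $C$ and $C^\perp$, with item 2(b) as the conjunction. Your side remark is also accurate. The paper invokes Theorem~\ref{ma15}(3) in item 2 only to identify $\rho^{-1}(C^\perp)$ with $(\rho^{-1}(C))^{\perp_{\mathrm{Tr}}}$, which the argument never actually uses, and that theorem requires every $n_i$ to be odd, not $n$.
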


\begin{proof}
By Corollary \ref{sari}, we obtain $n \ge t_1+t_2$, which implies that $n \ge 2\min\{t_1, t_2\}$.
\begin{enumerate}
    \item  By Theorem~\ref{th:so}, $\rho^{-1}(C)$ has strength $t_1 = d_\pi(C^{\perp})-1$. According to Lemma~\ref{m88}, $\rho^{-1}(C)$ is irredundant if and only if $d_H(\rho^{-1}(C)) \ge t_1 + 1 = d_\pi(C^{\perp})$. By the metric preservation property in Theorem~\ref{ma15}, $d_H(\rho^{-1}(C)) = d_\pi(C)$, which yields the result.
    \item  Under the stated conditions on $q$ and $n$, Theorem~\ref{ma15} implies that $\rho$ is dual-preserving, so $(\rho^{-1}(C))^{\perp_{\mathrm{Tr}}} = \rho^{-1}(C^\perp)$. Applying Lemma~\ref{m88} to the array $\rho^{-1}(C^\perp)$ of  strength $t_2 = d_\pi(C)-1$, $\rho^{-1}(C^\perp)$ is irredundant if and only if $d_H(\rho^{-1}(C^\perp)) \ge t_2 + 1 = d_\pi(C)$. Again by Theorem~\ref{ma15}, $d_H(\rho^{-1}(C^\perp)) = d_\pi(C^\perp)$, which proves \ref{2a}. Statement \ref{2b} follows immediately from the conjunction of \ref{1} and \ref{2a}.
\end{enumerate}
\end{proof}

%\begin{Example}
%Let $q=2$ and consider the partition $\pi =[2][1][1]$ of $n=4$ into $s=3$ blocks. Define the code $C \subseteq \mathbb{F}_2^4$ via the generator matrix $G$ and  the parity-check matrix $H$ given by
%$$
%G=
%\begin{bmatrix}
%1 & 0\mid  1\mid  0\\
%0 & 1 \mid  0 \mid  1
%\end{bmatrix}, \quad H \begin{bmatrix}
%1 & 0\mid  1\mid  0\\
%0 & 1 \mid  0 \mid  1
%\end{bmatrix}.
%$$
%It follows that $C^\perp=C$, and $d_\pi(C)=2=d_\pi(C^\perp)$. By Theorem \ref{latest}, the preimage $\rho^{-1}(C)$ is an $\mathbb{F}_2$-linear $\mathrm{IrMOA}(4, 3, (2^2, 2, 2), 1)$ of the form
%$$
%\begin{array}{ccc}
%0 & 0 & 0\\
%1 & 1 & 0\\
%\alpha & 0 & 1\\
%1+\alpha & 1 & 1
%\end{array}.
%$$
%\end{Example}

\begin{Example}
Let $q=2$ and consider the partition $\pi=[2][2][2][2][1][1]$ of $n=10$ into $s=6$ blocks. Define the code $C\subseteq \mathbb{F}_2^{10}$ via the generator matrix $G$ and the parity-check matrix $H$ given by
$$
G=
\begin{bmatrix}
0&0 \mid 1&1 \mid 0&1 \mid 0&1 \mid 1 \mid 1\\
0&1 \mid 0&0 \mid 0&1 \mid 1&1 \mid 1 \mid 0\\
1&0 \mid 0&1 \mid 0&1 \mid 1&0 \mid 0 \mid 0\\
0&0 \mid 0&1 \mid 1&1 \mid 1&1 \mid 0 \mid 1
\end{bmatrix},\quad
H=
\begin{bmatrix}
1&0 \mid 1&1 \mid 1&0 \mid 0&0 \mid 0 \mid 0\\
0&1 \mid 0&1 \mid 0&1 \mid 0&0 \mid 0 \mid 0\\
0&1 \mid 1&1 \mid 0&0 \mid 1&0 \mid 0 \mid 0\\
1&1 \mid 0&1 \mid 0&0 \mid 0&1 \mid 0 \mid 0\\
0&1 \mid 1&0 \mid 0&0 \mid 0&0 \mid 1 \mid 0\\
1&0 \mid 0&1 \mid 0&0 \mid 0&0 \mid 0 \mid 1
\end{bmatrix}.
$$
%The codewords of $C$ are 
%$$
%\begin{aligned}
%C=\{&
%00 \mid 00 \mid 00 \mid 00 \mid 0 \mid 0,\;
%00 \mid 01 \mid 11 \mid 11 \mid 0 \mid 1,\;
%10 \mid 01 \mid 01 \mid 10 \mid 0 \mid 0,\;
%10 \mid 00 \mid 10 \mid 01 \mid 0 \mid 1,\\
%&
%01 \mid 00 \mid 01 \mid 11 \mid 1 \mid 0,\;
%01 \mid 01 \mid 10 \mid 00 \mid 1 \mid 1,\;
%11 \mid 01 \mid 00 \mid 01 \mid 1 \mid 0,\;
%11 \mid 00 \mid 11 \mid 10 \mid 1 \mid 1,\\
%&
%00 \mid 11 \mid 01 \mid 01 \mid 1 \mid 1,\;
%00 \mid 10 \mid 10 \mid 10 \mid 1 \mid 0,\;
%10 \mid 10 \mid 00 \mid 11 \mid 1 \mid 1,\;
%10 \mid 11 \mid 11 \mid 00 \mid 1 \mid 0,\\
%&
%01 \mid 11 \mid 00 \mid 10 \mid 0 \mid 1,\;
%01 \mid 10 \mid 11 \mid 01 \mid 0 \mid 0,\;
%11 \mid 10 \mid 01 \mid 00 \mid 0 \mid 1,\;
%11 \mid 11 \mid 10 \mid 11 \mid 0 \mid 0
%\}.
%\end{aligned}
%$$
One can check that $d_\pi(C)=4$ and $d_\pi(C^\perp)=3$. Hence, Theorem~\ref{latest} implies that the preimage $\rho^{-1}(C)$ is an $\mathbb{F}_2$-linear $\mathrm{IrMOA}\bigl(16,6,(2^2,2^2,2^2,2^2,2,2),2\bigr)$ of the form 
$$
\begin{array}{cccccc}
0 & 0 & 0 & 0 & 0 & 0\\
0 & \alpha & 1+\alpha & 1+\alpha & 0 & 1\\
1 & \alpha & \alpha & 1 & 0 & 0\\
1 & 0 & 1 & \alpha & 0 & 1\\
\alpha & 0 & \alpha & 1+\alpha & 1 & 0\\
\alpha & \alpha & 1 & 0 & 1 & 1\\
1+\alpha & \alpha & 0 & \alpha & 1 & 0\\
1+\alpha & 0 & 1+\alpha & 1 & 1 & 1\\
0 & 1+\alpha & \alpha & \alpha & 1 & 1\\
0 & 1 & 1 & 1 & 1 & 0\\
1 & 1 & 0 & 1+\alpha & 1 & 1\\
1 & 1+\alpha & 1+\alpha & 0 & 1 & 0\\
\alpha & 1+\alpha & 0 & 1 & 0 & 1\\
\alpha & 1 & 1+\alpha & \alpha & 0 & 0\\
1+\alpha & 1 & \alpha & 0 & 0 & 1\\
1+\alpha & 1+\alpha & 1 & 1+\alpha & 0 & 0
\end{array}.
$$
For this IrMOA, we note that 
$$
\lambda_{i,j} = \frac{16}{2^{n_i+n_j}} = 
\begin{cases} 
1 & \text{if } i, j \in \{1, 2, 3, 4\}, \\ 
2 & \text{if one index is in } \{1, 2, 3, 4\} \text{ and the other in } \{5, 6\}, \\ 
4 & \text{if } i, j\in \{5, 6\}. 
\end{cases}
$$
\end{Example}

\medskip

{\bf Acknowledgments.} The research of Maryam Bajalan and Peter Boyvalenkov is supported, in part, by Ministry of Education and Science of Bulgaria under Grant no. DO1-98/26.06.2025 “National Centre for High-Performance and Distributed Computing”. The research of Ferruh \"{O}zbudak is supported by T\"{U}B\.{I}TAK under Grant 223N065.

%%%%%%%%%%%%%%%%%%%%%%%%%%%%%%%%%%%%%%%%%%%%%%%%%%%%%%%%%%%%%%%%%%%%%%%%%%%%%%%%%%%%%%%%%%%%%%%%%%%%%%%%%%%%%%%%%%%%

\end{document}